\documentclass[10pt, conference]{IEEEtran}

\usepackage{braket}
\usepackage{amsmath,amssymb,amsfonts,dsfont,amsthm}
\usepackage{algorithmic}
\usepackage{graphicx}
\usepackage{textcomp}
\usepackage{xcolor}
\usepackage{comment}
\usepackage{cite}
\usepackage{authblk}
\usepackage{xfrac}
\usepackage{tcolorbox}
\usepackage{xcolor}

\newcommand\blfootnote[1]{%
  \begingroup
  \renewcommand\thefootnote{}\footnote{#1}%
  \addtocounter{footnote}{-1}%
  \endgroup
}

\newtheorem{theorem}{Theorem}
\newtheorem{corollary}[theorem]{Corollary}
\newtheorem{remark}[theorem]{Remark}

\newtheorem{lemma}[theorem]{Lemma}
\newtheorem{definition}[theorem]{Definition}
\newtheorem{proposition}[theorem]{Proposition}

\newlength{\blank}
\newcommand{\1}{\mathds{1}}

\newcommand{\EE}{\mathbb{E}}
\newcommand{\RR}{\mathbb{R}}

\newcommand{\cB}{\mathcal{B}}

\newcommand{\cD}{\mathcal{D}}

\newcommand{\cF}{\mathcal{F}}
\newcommand{\cG}{\mathcal{G}}
\newcommand{\cH}{\mathcal{H}}

\newcommand{\cN}{\mathcal{N}}

\newcommand{\cT}{\mathcal{T}}

\newcommand{\cV}{\mathcal{V}}
\newcommand{\cX}{\mathcal{X}}
\newcommand{\cY}{\mathcal{Y}}

\newcommand{\cW}{\mathcal{W}}

\begin{document}

\title{Commitment over Gaussian channels with deterministic identification codes\protect\\[2mm] \textsc{\normalsize Dedicated to the memory of Anderson C. A. Nascimento}}

\author[1]{Pau Colomer\textsuperscript{$\,\star$,}}
\author[2,3]{Christian Deppe\textsuperscript{$\,\dagger$,}}
\author[1,3,4,5]{Holger Boche\textsuperscript{$\,\ddagger$,}}
\author[6,7]{Andreas Winter\textsuperscript{$\,\S$,}}

\affil[1]{\small\it Chair of Theoretical Information Technology, School of Computation, Information and Technology, \protect\\ Technische Universit\"at M\"unchen, Theresienstra{\ss}e 90, 80333 M\"unchen, Germany.\vspace{2pt}}

\affil[2]{\small\it Institute for Communications Technology, Faculty of Electrical Engineering, Information Technology, \protect\\and Physics, Technische Universit\"at Braunschweig, Schleinitzstraße 22, 38106 Braunschweig, Germany.\vspace{2pt}}

\affil[3]{\small\it BMFTR Research Hub 6G-life, Schleinitzstra{\ss}e 22, 38106 Braunschweig, and Arcisstra{\ss}e 21, 80333 M\"unchen, Germany.\vspace{2pt}}

\affil[4]{\small\it Cluster of Excellence, CeTI, Technische Universität Dresden, Georg-Schumann-Stra{\ss}e 9, 01187 Dresden, Germany.\vspace{2pt}}

\affil[5]{\small\it Munich Quantum Valley, Walther-von-Dyck-Straße 6, 85748 Garching, Germany; and\protect\\ Munich Center for Quantum Science and Technology Schellingstraße 4, 80799 München, Germany.\vspace{2pt}}

\affil[6]{\small\it Chair of Quantum Information and Computing, Department Mathematik/Informatik \protect\\ Abteilung Informatik, Universit\"at zu K\"oln, Weyertal 121, 50931 Köln, Germany.\vspace{2pt}}

\affil[7]{\small\it ICREA \&{} Grup d'Informaci\'o Qu\`antica, Departament de F\'isica, \protect\\ Universitat Aut\`onoma de Barcelona, 08193 Bellaterra (Barcelona), Spain.}

\date{2 June 2026} 

\maketitle

\pagestyle{plain}

\begin{abstract}
We establish a previously unnoticed connection between commitment (BC), a fundamental cryptographic primitive, and deterministic identification (DI), a post-Shannon communication setting. Specifically, for additive white Gaussian noise channels $\mathcal{G}$ we show that a reliable commitment protocol can be obtained from deterministic identification codes in both the honest but curious and fully dishonest security settings. This viewpoint allows BC schemes to inherit the asymptotic performance of DI codes. Indeed, we show that commitment is naturally achievable in the linearithmic regime, i.e., with message sets of size $N_n=\exp [\Theta(n\log n)]$, and we establish a lower bound on the linearithmic-scale BC capacity of $\dot C_{\text{hBC}}(\mathcal G)\geq\frac12$ in the honest but curious case and $\dot C_{\text{BC}}(\mathcal G)\geq\frac14$ in the fully dishonest picture. This refines the best previously known characterisation of the commitment capacity over $\mathcal{G}$, which established only an infinite linear-scale capacity. 
\end{abstract}

\begin{IEEEkeywords}
Bit commitment;
Commitment capacity;
Post-Shannon theory;
message identification;
deterministic identification;
Gaussian channel.
\end{IEEEkeywords}

\blfootnote{
{\textsuperscript{$\star$}pau.colomer@tum.de, \hspace{20pt} \textsuperscript{$\dagger$}christian.deppe@tu-bs.de,}\vspace{1pt}

{\textsuperscript{$\ddagger$}boche@tum.de, \hspace{45pt}\textsuperscript{$\S$}andreas.winter@uni-koeln.de.}}

\section{Introduction}\label{sec:intro}
Cryptography was initially developed in the setting of secure communication between legitimate users in the presence of an external adversary. The classical problems of encryption, authentication, and key agreement aim to protect confidentiality and integrity against an eavesdropper or an active attacker controlling the communication channel. This paradigm has led to a rich mathematical theory, beginning with Shannon's foundational work on secrecy systems \cite{Shannon:secrecy} and evolving into modern computational cryptography.

A different class of cryptographic problems emerges when the communicating parties do not trust one another. Rather than cooperating against an external adversary, each participant may attempt to gain an advantage by deviating from the prescribed protocol. Bit-string commitment (BC) is one of the fundamental primitives on these mutually distrustful settings. It allows one party to commit to a secret value while keeping it hidden from a second party (the \emph{commit phase}) until a later date, at which point the secret is revealed and verified (the \emph{reveal phase}). Protocols such as distributed and secure multi-party computation, zero knowledge proofs, coin tossing, or contract signing \cite{Y:secure-computation,ZCGLT:secure-computation, BC:zero-knowledge, Blum:coin-flipping, EGL:contract-signing} use BC as a building block. 

A secure commitment scheme must simultaneously satisfy the three following (and apparently conflicting) requirements: 

\vspace{-1pt}
\begin{itemize}
    \item \textbf{Concealment}: The receiver cannot learn the committed word until it is revealed by the sender.
    \item \textbf{Bindingness}: The sender cannot change the word once it has committed.
    \item \textbf{Soundness}: After the commitment is revealed, the receiver must be able to verify it.
\end{itemize}

The commitment of classical messages was initially studied using noiseless channels. In such cases, local BC schemes cannot be perfectly concealing and perfectly binding simultaneously \cite{DKS:OT_and_BC} (this impossibility extends also to noiseless quantum channels \cite{Mayers:no-go,LC:no-go}). However, imposing computational limitations on the sender and/or the receiver, computationally secure BC over noiseless channels can be achieved \cite{Naor:BC_pseudorandomness, HM:BC_computationally_secure}. Furthermore, if the two users are spatially separated, unconditionally secure BC protocols over noiseless channels, without computational or time constraints, can be designed leveraging relativistic effects \cite{Kent:relativistic}.

The use of noisy communication channels as cryptographic resources, introduced by Wyner in his seminal work on wiretap channels \cite{Wyner:wiretap}, marked a new chapter on the field. Indeed, BC over (noisy) binary symmetric channels (BSCs) was shown to be unconditionally secure \cite{Crepeau:crypto_protocols, DKS:OT_and_BC}. This result opened a promising new line of research, with a focus on how to generalise and optimise noise as a resource for BC. 
\medskip

Much like in Shannon's theory of channel capacities \cite{Shannon:TheoryCommunication}, one wishes to commit to a single message $i\in [M]$ from a set of $M$ elements through $n$ uses of a noisy channel. The commitment \emph{rate} of a protocol is thus defined as $R(n)\!=\!\frac{1}{n}\log M$, and a rate $R_{\text{BC}}$ is said to be \emph{achievable} if there exist BC protocols for every $n$ with rates $R(n)$ converging to $R_{\text{BC}}$ and vanishing probability of errors as $n\to\infty$. The supremum of all achievable rates is known as the \emph{capacity of commitment} and denoted here by $C_{\text{BC}}$.
The BC capacity over discrete memoryless channels (DMCs), which include the previously studied BSCs, was established in \cite{WNI:BC_capacity_DMC}. 

\begin{theorem}[Commitment capacity for DMCs \cite{WNI:BC_capacity_DMC}]\label{thm:BC_DMC}
The commitment capacity of a non-redundant DMC $\cV$ is the maximal equivocation of the channel over all input distributions:
\[
C_{\text{BC}}(\cV)=\max_{p_X} H(X|Y),
\]
where $H(X|Y)$ denotes the conditional entropy, the average uncertainty about the transmitted input given the output.
\end{theorem}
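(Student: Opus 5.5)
The plan has two halves: an achievability argument showing every rate below $\max_{p_X}H(X|Y)$ is attainable, and a converse showing no higher rate is. Throughout, the input distribution $p_X$ is fixed and typical-set arguments are used. Non-redundancy of $\cV$ will be used in exactly one place, bindingness: it guarantees that no input row $\cV(\cdot|x)$ is a convex combination of the other rows.

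\textbf{Achievability.} I would build the protocol as follows.
\begin{itemize}
\item The sender draws a random codebook of about $2^{nH(X)}$ sequences of type $p_X$ and sends a uniformly chosen codeword $x^n$ through $\cV^n$, so Bob sees $y^n$.
\item The message $i$ is bound to $x^n$ via a random (universal) hash $f:\cX^n\to[M]$ with $\log M = n(H(X|Y)-\delta)$.
\item In the commit phase Alice additionally sends, over the noiseless side channel, a random hash $g(x^n)$ of about $n(I(X;Y)+\delta')$ bits. Together with $y^n$ this pins $x^n$ down (Wolfowitz-type hashing).
\item In the reveal phase Alice announces $x^n$ and $i$. Bob checks that $(x^n,y^n)$ is jointly typical, that $g(x^n)$ matches, and that $f(x^n)=i$. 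Equivalently, one can take the codebook partitioned into bins and let $i$ index the bin.
\end{itemize}
Soundness follows from joint typicality.

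Concealment comes from privacy amplification, i.e.\ the leftover hash lemma. Bob's knowledge $(y^n,g(x^n))$ leaves smooth min-entropy about $nH(X|Y)-n\delta'$ in $x^n$. The hash $f$ then extracts $n(H(X|Y)-\delta)$ bits that are nearly uniform given Bob's view, so $I(i; \text{view})\to 0$.

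Bindingness is the step I expect to be the main obstacle. A cheating Alice may send an arbitrary $x^n$ and later wish to open two different $\tilde x^n$, $\hat x^n$ with $f(\tilde x^n)\neq f(\hat x^n)$. The argument I would attempt:
\begin{itemize}
\item Show that any $\tilde x^n$ passing the typicality test with non-negligible probability must be close to the sent $x^n$ in Hamming distance. Non-redundancy gives a strictly positive divergence separating each row from the convex hull of the others, so each disagreeing position shifts the output statistics detectably; a Chernoff bound then gives $d_H(\tilde x^n,x^n)\le \epsilon n$.
\item Count the sequences within distance $\epsilon n$ of $x^n$: there are at most $2^{n h(\epsilon)+n\epsilon\log|\cX|}$ of them.
\item The hash $g$, of length exceeding this exponent, separates them with high probability, so Alice cannot find two valid openings.
\end{itemize}
The delicate points are making the tests robust to adversarially chosen non-typical inputs and keeping $\epsilon$ uniform.

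\textbf{Converse.} Take any protocol with $n$ uses, $\epsilon$-concealing and $\epsilon$-binding. Let $X^n$ be the channel inputs and $U$ the message.
\begin{itemize}
\item Bindingness plus soundness imply that $U$ is essentially determined by $(X^n, Y^n, \text{public messages})$: otherwise Alice could choose her openings after the fact. Fano's inequality then gives $H(U|X^nY^nV)\le n\epsilon'$, where $V$ is the public communication.
\item Concealment gives $H(U|Y^nV)\ge \log M-n\epsilon'$.
\item Hence $\log M\le H(X^n|Y^nV)+2n\epsilon'\le \sum_t H(X_t|Y_t)+2n\epsilon'$, using memorylessness and the fact that conditioning reduces entropy (with care for interactive $V$, e.g.\ via the Markov chain at each use). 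This is at most $n\max_{p_X}H(X|Y)+2n\epsilon'$.
\end{itemize}
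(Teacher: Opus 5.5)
The paper does not prove this statement; it is quoted from Winter, Nascimento and Imai. Your outline has the same overall architecture as their argument: a constant-type input, a conditional-typicality check at Bob, privacy amplification for concealment, non-redundancy forcing any accepted opening to be Hamming-close to what was sent, and a converse from concealment plus binding. But the achievability protocol has a concrete error in the auxiliary hash $g$, and the converse leaves its key step unjustified.

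\textbf{The hash $g$.} You give $g$ about $n(I(X;Y)+\delta')$ bits and the job of pinning $x^n$ down together with $y^n$. Both are incompatible with concealment. Given $y^n$, roughly $2^{nH(X|Y)}$ codewords remain conditionally typical, so pinning $x^n$ down takes about $nH(X|Y)$ bits (Slepian--Wolf). If Bob could do it, he would simply compute $f(x^n)=i$. Concretely:
\begin{itemize}
\item When $I(X;Y)\ge H(X|Y)$, your $g$ lets Bob decode $x^n$, and concealment collapses.
\item When $I(X;Y)<H(X|Y)$, announcing $g(x^n)$ lowers Bob's smooth min-entropy to about $n(H(X|Y)-I(X;Y)-\delta')$, not $nH(X|Y)-n\delta'$, so the claimed rate is not reached.
\end{itemize}
Your binding step needs far less. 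Let $S$ be the Hamming ball of radius $\epsilon n$ around the transmitted sequence. A $2$-universal $g$ fails to be injective on $S$ with probability at most $|S|^2 2^{-\ell}$. So a length $\ell$ slightly above $2n(h(\epsilon)+\epsilon\log|\mathcal{X}|)$ suffices, at vanishing rate cost as $\epsilon\to 0$.

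This union bound also needs $g$ to be random and independent of Alice's channel input, which you get by having Bob draw it after the channel use. The protocol must say so. For standard families (e.g.\ random binary matrices), a cheating Alice who chose $g$ herself would take one that collides on two type-$p_X$ sequences differing by a swap of two positions. She sends one of them; both pass the typicality test, and she can open either. Alternatively, as in the original proof, restrict Alice to a constant-composition code with minimum distance exceeding $2\epsilon n$ (Gilbert--Varshamov, rate loss about $h(2\epsilon)+2\epsilon\log|\mathcal{X}|$), and let Bob check membership. Then at most one codeword lies within $\epsilon n$ of whatever Alice sent, and no extra hash is needed.

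\textbf{The converse.} The claim that binding and soundness make $U$ essentially determined is the heart of the matter. ``Otherwise Alice could choose her openings after the fact'' does not justify it, because Alice never sees $Y^n$. The missing argument is a resampling one:
\begin{itemize}
\item Since the channel is memoryless and all public messages are computed locally, Alice's private variables (including $U$) are conditionally independent of Bob's view $(Y^n,V,R_B)$ given $(X^n,V)$. Here $R_B$ is Bob's private randomness.
\item A cheating Alice can therefore redraw her private state from its conditional law given $(X^n,V)$. This yields a second opening $U'$ whose joint law with Bob's view equals the honest one.
\item Both openings are accepted with probability at least $1-2\epsilon_s$, so binding gives $\Pr\{U'\neq U\}\le 2\epsilon_s+\epsilon_b$.
\item Since $U,U'$ are conditionally i.i.d., the maximum-a-posteriori guess of $U$ from $(X^n,V)$ errs with at most this probability. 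Fano then gives $H(U|X^nV)\le 1+(2\epsilon_s+\epsilon_b)\log M$, which is stronger than what you state.
\end{itemize}
With this in place, your chain $\log M\lesssim H(X^n|Y^nV)\le H(X^n|Y^n)\le\sum_t H(X_t|Y_t)$ goes through. No extra care is needed for interaction, since each $P_{Y_t|X_t}$ is the channel regardless. One caveat: carry the Fano and continuity terms as $O(1)+O(\epsilon)\log M$ and rearrange, rather than writing $n\epsilon'$, which presupposes $\log M=O(n)$.
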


The result above shows that unconditionally secure BC can be achieved in general by communicating through any non-redundant noisy DMC. Notice that, if the channel is noiseless, the conditional entropy vanishes and, hence, the capacity is zero, consistently with the impossibility of doing reliable commitment over noiseless channels. 

An efficient protocol to achieve the capacity in Theorem \ref{thm:BC_DMC} was introduced in \cite{IMNW:efficient_protocol}, and \cite{MYMB:cost-constraint} provided an extension to a setting with cost constraints. It was shown in \cite{Hayashi:commitment_list-decoding} that commitment could be achieved with a list-decoding strategy.

An interesting behaviour was observed when moving beyond discrete channels. Specifically, for additive white Gaussian noise (AWGN) channels $\cG$, which have continuous input and output alphabets, it was shown in \cite{NBSH:commitment_Gauss} that unconditionally secure commitment was possible and the commitment capacity $C_{\text{BC}}(\cG)=\infty$ is infinite! 

Since the rates have been defined as $R(n)=\frac{1}{n}\log M$, an infinite capacity means that the length $\log M$ of the messages scales faster than linearly with $n$ (making the capacity diverge as $n\to\infty$). But how fast does it really scale? Can we establish a meaningful value (not infinite or zero) for the BC capacity over AWGN channels defined in the correct super-linear scale? These questions motivate the present work.

Commitment over continuous channels was also studied in \cite{Hayashi:commitment_list-decoding}, but the rates there were deliberately kept on the linear scale to enforce the connection to list-decoding, hence losing the increased scaling behaviour.
\medskip

The difference in size scaling of optimal BC codes over discrete and continuous channels resembles that observed in deterministic identification. The identification of messages (ID) is a point-to-point communication paradigm in which, instead of reconstructing the sent message, the receiver is only concerned in reliably identifying if that sent message coincides with a specific one of its choosing, simplifying the decoding task to a hypothesis test \cite{Ja:ID_easier}. 

When randomness is available upon encoding, one can identify exponentially more messages than one could transmit \cite{AD:ID_ViaChannels, HanVerdu:ID}. In other words, while the length of the messages that can be reliably transmitted scales linearly with the block length $\log M\sim nR_\text{T}$, the length of the $N$ messages that can be reliably identified scales exponentially fast $\log N\sim 2^{nR_\text{ID}}$. The ID rates must thus be defined as $\ddot R_{\text{ID}}(n)=\frac{1}{n}\log\log N$ (where the double dot is used to indicate the double logarithmic scale). Otherwise, they would diverge as $n\to\infty$. In other words, since the capacity is the supremum of all achievable rates, an identification capacity defined in a linear-scale regime (like that of transmission) immediately produces $C_\text{ID}=\infty$. However, if the identification capacity is suitably defined in the exponential scale the following result is obtained.

\begin{theorem}[Randomised identification capacity \cite{AD:ID_ViaChannels,HanVerdu:ID}]\label{thm:RID}
For any memoryless point-to-point channel $\cW$, the identification capacity defined in the exponential scale equals the Shannon transmission capacity defined in the linear scale:
\[
\ddot C_\text{ID}(\cW)=\max_{p_X}I(X;Y)=C_\text{T}(\cW),
\]
where $I(X;Y)$ denotes the mutual information between the channel's input and output.
\end{theorem}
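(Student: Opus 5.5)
The plan has two halves: an achievability bound $\ddot C_\text{ID}(\cW)\ge C_\text{T}(\cW)$ and a converse $\ddot C_\text{ID}(\cW)\le C_\text{T}(\cW)$. For simplicity of exposition we treat DMCs first and then say how to extend.

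\emph{Achievability.} We follow the Ahlswede--Dueck concatenation idea. Fix a rate $R<C_\text{T}(\cW)$ and a transmission code of block length $n$ with $M=2^{nR}$ codewords and vanishing maximal error $\varepsilon_n$; such a code exists by Shannon's coding theorem. The identification code is then built as follows.
\begin{itemize}
\item Choose a family of $N$ subsets $\mathcal{A}_1,\dots,\mathcal{A}_N\subseteq[M]$, each of size $K$, with pairwise intersections $|\mathcal{A}_i\cap\mathcal{A}_j|\le\lambda K$ for $i\neq j$.
\item The encoder for message $i$ picks a uniformly random $m\in\mathcal{A}_i$ and sends the $m$-th transmission codeword.
\item The receiver testing for $j$ decodes $\hat m$ and accepts iff $\hat m\in\mathcal{A}_j$.
\end{itemize}
The two error probabilities are then controlled directly. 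The missed-identification probability is at most $\varepsilon_n$. The false-identification probability is at most $\lambda+\varepsilon_n$, because the randomly chosen $m$ lies in $\mathcal{A}_j$ with probability at most $\lambda$.

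The combinatorial step is a random (Gilbert--Varshamov type) selection of the sets $\mathcal{A}_i$. It should show that $N\ge 2^{(1-\delta)M}$ such sets exist when $K\approx M\lambda$ is taken suitably, e.g.\ $K=\lfloor \lambda M / \log M\rfloor$. This gives $\log\log N\ge nR-o(n)$, so $\ddot R_\text{ID}\ge R$. Instead of the explicit binary-entropy estimate one could use a Chernoff bound on the intersection sizes together with a union bound over pairs.

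\emph{Converse.} This is the main obstacle. We would invoke Han--Verd\'u's resolvability (soft-covering) approach. Every encoding distribution $P_i$ on $\cX^n$ can be replaced by an $L$-type distribution $\tilde P_i$, uniform over $L=2^{n(C_\text{T}+\gamma)}$ input sequences. The replacement must satisfy $\|P_i\cW^n-\tilde P_i\cW^n\|_1\le\eta$, with $\eta$ small compared with $1-\lambda_1-\lambda_2$.

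If the ID errors satisfy $\lambda_1+\lambda_2<1$, then distinct messages need output distributions at distance at least $1-\lambda_1-\lambda_2$. Hence the approximating distributions must be distinct for distinct messages, up to this tolerance. Counting the possible $\tilde P_i$ gives
\[
N\le |\cX|^{nL}=\exp\bigl(O(n)\,2^{n(C_\text{T}+\gamma)}\bigr),
\]
and therefore $\ddot R_\text{ID}\le C_\text{T}+\gamma+o(1)$.

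The delicate part is the resolvability lemma itself. It should follow from a random-selection argument, drawing the $L$ sequences i.i.d.\ from $P_i$ and using information-spectrum concentration so that $\frac1n\log\frac{W^n(y|x)}{P_iW^n(y)}\lesssim C_\text{T}$. The nonstationary, non-i.i.d.\ nature of $P_i$ makes this step the hardest. Han--Verd\'u handle it through the sup-information rate, which is bounded by $C_\text{T}$ for memoryless channels; a weak-converse version suffices here.

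For general memoryless (e.g.\ continuous) channels, we quantise inputs and outputs and repeat the argument. This step needs care but is standard.
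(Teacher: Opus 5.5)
The paper does not prove this theorem; it cites Ahlswede--Dueck for achievability (a transmission code concatenated with a family of almost-disjoint subsets) and Han--Verd\'u for the converse (channel resolvability). Your outline reconstructs exactly these two arguments, so the route is right.

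One step is false as written, and it needs a different estimate. With $K=\lfloor\lambda M/\log M\rfloor$ you cannot have $N\ge 2^{(1-\delta)M}$, because $N\le\binom{M}{K}\le 2^{K\log(eM/K)}=2^{O(M\log\log M/\log M)}=2^{o(M)}$. The alternative ``$K\approx\lambda M$'' also fails: two random $K$-subsets then have expected intersection $K^2/M=\lambda K$, exactly at the threshold, so the union bound over pairs breaks down. The working choice is $K=\lfloor\varepsilon M\rfloor$ with $\varepsilon$ small compared with $\lambda$. The tail bound $\Pr\{|\mathcal A_i\cap\mathcal A_j|\ge\lambda K\}\le\binom{K}{\lceil\lambda K\rceil}(K/M)^{\lambda K}\le(e\varepsilon/\lambda)^{\lambda K}$, together with a union bound over the $N^2$ pairs, allows $\log N\approx\frac{\lambda K}{2}\log\frac{\lambda}{e\varepsilon}=\Theta(M)$. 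Ahlswede--Dueck's Proposition~1 similarly gives $N\ge 2^{K}-1$ when $\lambda\log(1/\varepsilon-1)>2$. Either way $\log\log N\ge nR-O(1)$, so your conclusion $\ddot R_\text{ID}\ge R$ survives once this estimate is substituted. Because the paper defines capacity with vanishing errors, you should also let $\lambda=\lambda_n\to0$, for example $\lambda_n=1/n$ and $\varepsilon_n=\lambda_n/e^2$. This costs only $O(\log n)$ in $\log\log N$.

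Your converse sketch is sound, and the step you flag as hardest has a short proof. Let $Q^*$ be the capacity-achieving output distribution and split
$\log\frac{W^n(y|x)}{P_iW^n(y)}=\log\frac{W^n(y|x)}{Q^{*n}(y)}+\log\frac{Q^{*n}(y)}{P_iW^n(y)}$.
\begin{itemize}
\item By Markov's inequality, the second term exceeds $n\gamma$ with probability at most $2^{-n\gamma}$.
\item Given $x^n$, the first term is a sum of independent terms with means $D(W(\cdot|x_t)\|Q^*)\le C_\text{T}$ and bounded variances. Chebyshev's inequality then bounds it by $n(C_\text{T}+\gamma)$ with high probability, uniformly over all $P_i$.
\item For the AWGN channel the same split works with $Q^*=\cN(0,(P+\sigma^2)I_n)$. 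The conditional mean is at most $nC_\text{T}$ by the power constraint, and the variance is $O(n)$.
\end{itemize}

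For continuous inputs, the count $|\cX|^{nL}$ must be replaced by a count over a $\delta$-net of the power ball $\|x^n\|_2^2\le nP$. This net has $2^{O(n\log n)}$ points for $\delta=n^{-c}$. Moving each sampled point to the net changes the output mixture by $O(\delta/\sigma)$ in total variation, so $\log\log N\le n(C_\text{T}+\gamma)+O(\log n)$ still holds. Quantising the outputs is unnecessary. It can also be harmful, since the separation $1-\lambda_1-\lambda_2$ relies on arbitrary decoding sets and only shrinks under output processing.
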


When randomness is not available to the encoder, a setting known as \emph{deterministic identification} (DI), the exponential gain of identification is lost \cite{AC:DI}. For DMCs, the length of the identifiable words can just be linear $\log N\sim nR_\text{DI}$, like in the typical Shannon transmission of messages setting, although with higher rates $R_\text{DI}\geq R_\text{T}$ \cite{SPBD:DI_power}.
\begin{theorem}[DI capacity for DMCs \cite{SPBD:DI_power}]\label{thm:DI_DMC}
Given an injective (a.k.a.~reduced) DMC $\cV'$ the DI capacity defined in the linear scale is 
\[
C_\text{DI}(\cV')=\max_{p_X}H(X)=\log[\#_\text{row}(\cV')]
\]
where $H(X)$ is the usual Shannon entropy of the input maximised over all input probability distributions, which is equal to the number of distinct rows of the reduced channel, denoted by $\#_\text{row}(\cV')$.
\end{theorem}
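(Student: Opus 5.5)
Since this is a cited result, the plan is to prove Theorem~\ref{thm:DI_DMC} by matching achievability and converse bounds. Both bounds will go through the $\#_\text{row}(\cV')$ distinct input letters of the reduced channel. Write $K=\#_\text{row}(\cV')$. Because $\cV'$ is injective, the rows $V'(\cdot|x)$ for $x\in\cX$ are pairwise distinct probability vectors, so there are exactly $K=|\cX|$ of them. The identity $\max_{p_X}H(X)=\log K$ is then immediate, attained by the uniform distribution. It therefore suffices to show $C_\text{DI}(\cV')=\log K$.

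\emph{Converse.} In a DI code the messages are deterministic codewords $u_i\in\cX^n$ with decoding sets $\cD_i$. Two messages with the same codeword cannot be distinguished: if $u_i=u_j$, then $W^n(\cD_i|u_j)=W^n(\cD_i|u_i)\ge 1-\lambda_1$, which contradicts the type-II bound $W^n(\cD_i|u_j)\le\lambda_2$ whenever $\lambda_1+\lambda_2<1$. Hence the codewords are distinct, $N\le K^n$, and $\frac1n\log N\le\log K$. This is a strong converse, so no asymptotic estimate is needed.

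\emph{Achievability.} I will build a large constant-composition code in $\cX^n$ with pairwise Hamming distance at least $\delta n$, with decoders that are typicality tests. By Gilbert--Varshamov, the number of such codewords is at least $K^n/\mathrm{Vol}(\text{Hamming ball of radius }\delta n)$, which gives rate at least $\log K-h(\delta)-\delta\log(K-1)$. This tends to $\log K$ as $\delta\to0$. For the decoder, let $\cD_i$ be the set of outputs $y^n$ that are conditionally typical given $u_i$. This means the empirical joint type of $(u_i,y^n)$ is within $\epsilon$ of $p_{u_i}\times V'$ on every letter pair, with $\epsilon$ small compared to $\delta$. Type-I error then vanishes by the law of large numbers.

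\emph{Type-II error, the main obstacle.} Let $u_j$ differ from $u_i$ in at least $\delta n$ positions. Since all rows are distinct, define $\eta=\min_{x\ne x'}\|V'(\cdot|x)-V'(\cdot|x')\|_1>0$. On the positions where $u_j$ has letter $x'$ and $u_i$ has letter $x$, the output sent under $u_j$ has empirical distribution near $V'(\cdot|x')$. This differs from $V'(\cdot|x)$ by at least $\eta$ on a set that is a constant fraction of the block. I will use a pigeonhole step to find a pair $(x,x')$ that accounts for at least $\delta n/K^2$ positions. The joint type of $(u_i,y^n)$ restricted to $u_i=x$ is then a mixture in which a $\ge\delta/K^2$ fraction comes from the wrong row. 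This moves the empirical conditional distribution by about $\eta\delta/K^2$, which exceeds the $\epsilon$ tolerance once $\epsilon\ll\eta\delta/K^2$. Hoeffding's inequality makes the probability of still landing in $\cD_i$ exponentially small.

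The delicate point is the mixture argument. The contributions from other mismatched letters could partially cancel the deviation. I would handle this by testing each row separately: select a test set $A_{x,x'}$ on which $V'(A|x)$ and $V'(A|x')$ differ by at least $\eta/2$, and use the signed linear statistics along with the rows. If cancellation persists, the fallback is to restrict to a subcode in which all mismatches between codewords occur for a fixed ordering of letter pairs. This costs only a vanishing rate and makes the deviation one-signed. Letting $n\to\infty$ and then $\delta\to0$ gives $C_\text{DI}(\cV')\ge\log K$.
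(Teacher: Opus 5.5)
The paper gives no proof of this theorem; it only cites it. Judged on its own, your proposal has a genuine gap in the type-II step, although your converse and your construction are both sound.

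\textbf{What is correct.} The converse is right: $\lambda_1+\lambda_2<1$ forces pairwise distinct codewords, so $N\le K^n$ for every $n$. The code you build also works: a Hamming-separated constant-composition codebook with conditional-typicality decoders.

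\textbf{The gap.} Your pigeonhole step claims that a $\ge\delta/K^2$ fraction of wrong-row positions shifts the conditional output type given $u_i=x$ by about $\eta\delta/K^2$. This fails whenever a row of $V'$ is a convex combination of other rows. Take rows $V'_a=(1,0)$, $V'_b=(0,1)$, $V'_c=(\frac12,\frac12)$. If $u_j$ puts $a$ on half and $b$ on the other half of the positions where $u_i=c$, the outputs there have exactly the frequencies expected under $u_i$.

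You notice this cancellation, but neither remedy removes it:
\begin{itemize}
\item Any test set $A_{x,x'}$, or any linear statistic evaluated on the positions where $u_i=x$, has the same mean under $u_i$ and under such a $u_j$. Since $\cD_i$ may not depend on $j$, per-row testing adds nothing beyond typicality.
\item The ``one-signed'' subcode is asserted without proof. In its natural reading (each pair disagrees in only one direction with respect to a fixed order of the letters), it is incompatible with constant composition. Equal types give equal rank sums, so $u_{j,t}\ge u_{i,t}$ for all $t$ forces $u_j=u_i$, and the subcode has a single codeword.
\end{itemize}

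\textbf{The missing idea} is to use the constant composition, which you introduce but never invoke. For $u_i,u_j$ of the same type $p$, the joint type $J$ has equal marginals, so at each letter the off-diagonal outflow equals the inflow.

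Suppose $\sum_{x'}J(x,x')(V'_{x'}-V'_x)=0$ for all $x$ while $J$ is not diagonal. Let $S$ be the set of letters carrying off-diagonal mass, and pick $x_0\in S$ whose row is an extreme point of $\mathrm{conv}\{V'_x:x\in S\}$. Its row equation writes $V'_{x_0}$ as a convex combination of rows $V'_{x'}$ with $x'\in S\setminus\{x_0\}$, all distinct from $V'_{x_0}$. This contradicts extremality.

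By compactness, off-diagonal mass $\ge\delta$ then forces some entry of the expected joint type of $(u_i,Y^n)$ under $u_j$ to deviate from $p(x)V'(y|x)$ by at least some $c(\delta,V')>0$. In the example, equal types force $u_j$ to place $c$'s where $u_i$ has $a$ or $b$, and those positions are detected. Taking $\epsilon<c/2$, rather than merely $\epsilon\ll\delta$, makes the type-II error exponentially small by Hoeffding.

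A shortcut is to threshold $\sum_t V'(y_t|u_{i,t})$ instead of testing typicality. Equal types give $\sum_t\|V'_{u_{i,t}}\|_2^2=\sum_t\|V'_{u_{j,t}}\|_2^2$. Hence the mean of this statistic under $u_i$ exceeds its mean under $u_j$ by exactly $\frac12\sum_t\|V'_{u_{i,t}}-V'_{u_{j,t}}\|_2^2\ge\frac12\delta n\min_{x\ne x'}\|V'_x-V'_{x'}\|_2^2$. Hoeffding then bounds both error types uniformly in $j$.
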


However, for channels with continuous input the scaling can be faster. In fact, the length of the identifiable messages without the use of randomness scales linearithmically $\log N\sim nR_\text{DI} \log n$. This effect was initially observed for Gaussian and Poisson channels \cite{DI-fading, RRGauss-arXiv, DI-poisson} and later generalised to any channel with continuous input and discrete output \cite{CDBW:Reliability-TCOM, CDBW:DI_classical}.

This new intermediate scaling requires, similarly to what we observed for unrestricted randomised identification, a suitable definition for the rates and capacity. The DI rates are defined as $\dot R(n)=\frac{1}{n\log n}\log N$, and the supremum of all achievable rates is the linearithmic scale DI capacity $\dot C_\text{DI}$ (with the single dot denoting this intermediate scaling). 
Notice that, since the DI code sizes for continuous channels $\cW$ satisfy $\log N\sim nR_\text{DI}\log n$, its DI capacity defined in the linear scale would be $C_\text{DI}(\cW)=\infty$, and the DI capacity defined in the exponential scale would be $\ddot C_\text{DI}(\cW)=0$. Only if the capacity is suitably defined in the linearithmic scale we obtain a meaningful positive and finite value describing the optimum rate (at the correct scale).
\begin{theorem}[DI capacity for continuous channels \cite{CDBW:DI_classical,CDBW:Gauss,CDBW:Bern_arxiv}]\label{thm:DI_continuous}
Let $\cW$ be a channel with continuous input alphabet $\cX$ and discrete output $\cY$, and let $\widetilde \cX=\cW(\cX)$ denote the input image. Then the DI capacity in the linearithmic scale satisfies
\[
\frac{1}{4}d_M\left(\sqrt{\widetilde\cX}\right)\leq \dot C_\text{DI}(\cW)\leq\frac{1}{2}d_M\left(\sqrt{\widetilde \cX}\right),
\]
where $d_M$ denotes the (lower) Minkowski dimension. For some channels with $d_M=1$ such as the Bernoulli or the Poisson channels, and some channels with continuous input \emph{and} output, such as the Gaussian channel, the capacity is exactly characterised by
\(
\dot C_\text{DI}(\cW)=\frac{1}{2}.
\)
\end{theorem}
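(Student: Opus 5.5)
The theorem bundles a general sandwich bound with exact values for specific channels, so I will prove the two halves separately and then specialise. Throughout, I equip the input image $\widetilde\cX\subset\RR^{|\cY|}$ with the metric induced by the square root map $x\mapsto\sqrt{x}$ (taken coordinatewise on output distributions). The reason for this choice is that the Hellinger distance, and hence the Bhattacharyya fidelity, between output distributions becomes Euclidean in these coordinates. For $n$ uses of the channel the fidelity tensorises,
\[
F\big(\cW^n(x^n),\cW^n(x'^n)\big)=\prod_{t}F\big(\cW(x_t),\cW(x'_t)\big).
\]
Consequently, a DI code with small errors corresponds to a packing of codewords in $(\sqrt{\widetilde\cX})^n$ whose pairwise distances are bounded below. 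Conversely, any such packing yields a DI code, with the hypothesis-testing error controlled by the fidelity.

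For the achievability bound $\tfrac14 d_M$, I will pack balls of radius $\sqrt{\epsilon_n}$, with $\epsilon_n=n^{-(1/2-\delta)}$, in the normalised product space. By the definition of the lower Minkowski dimension, $\sqrt{\widetilde\cX}$ contains about $r^{-d_M}$ points at mutual distance $r$, and a Gilbert–Varshamov-style greedy packing in the product space with total squared distance $\ge n\epsilon_n$ then gives
\[
\log N\gtrsim n\,d_M\log(1/\sqrt{\epsilon_n})\approx \tfrac{n}{2}d_M\cdot\tfrac12\log n .
\]
For decoding, I will use distance-test decoders, i.e. thresholding the empirical statistic $\sum_t$ of output frequencies against $\sqrt{\cW(x_t)}$ coordinates. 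Chebyshev's inequality then shows that both error types vanish as long as $n\epsilon_n\gg\sqrt n$, which is exactly what forces the exponent $\tfrac12$ and costs the factor $\tfrac12$ in the rate.

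For the converse $\tfrac12 d_M$, I will show that two codewords whose output distributions are too close cannot both have small type I and type II errors. Indeed, $\lambda_1+\lambda_2\ge 1-\|\cW^n(x^n)-\cW^n(x'^n)\|_{TV}$, so the codewords must be $\Omega(1)$-separated in total variation. Translating this into Hellinger distance, the codewords must be $\gtrsim n^{-1/2}\cdot$const apart per letter on average; more carefully, I will argue that a per-coordinate squared distance of order $1/n$ is necessary. A covering argument, counting how many codewords fit in $(\sqrt{\widetilde\cX})^n$ at scale $r=n^{-1/2}$, then gives
\[
\log N\le n\,d_M\log(\sqrt n)(1+o(1))=\tfrac12 d_M\,n\log n .
\]
This uses the upper-box-type covering number, so for the lower dimension the bound must be read with care (possibly via the upper dimension, or by assuming the two coincide). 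I expect this step to be the main obstacle: converting a global TV separation into a per-letter covering is not immediate. I would handle it by first reducing to constant-composition or "typical" subcodes, then applying a volume or covering bound in the $n|\cY|$-dimensional product space.

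For the exact values, the Bernoulli and Poisson channels have $\sqrt{\widetilde\cX}$ equal to a rectifiable curve, so $d_M=1$. Here I will sharpen the achievability to $\tfrac12$ by exploiting a one-parameter structure: the codewords are chosen on a grid of the parameter at spacing $n^{-1/2+\delta}$, and a refined sub-Gaussian concentration of the sufficient statistic replaces Chebyshev, removing the loss. For the Gaussian channel, which has continuous output and so falls outside the general statement, I would redo the same two steps directly. The achievability uses a packing of $[-\sqrt P,\sqrt P]^n$ with Euclidean distance $n^{1/4}$-type separation together with a distance decoder. The converse uses the Gaussian TV/Hellinger formula $1-e^{-\|x-x'\|^2/8\sigma^2}$ together with the cube-covering bound; both match at $\tfrac12$.
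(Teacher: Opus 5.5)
The paper only cites this theorem from \cite{CDBW:DI_classical,CDBW:Gauss,CDBW:Bern_arxiv} and gives no proof, so your plan is measured against the statement itself. The exact value $\tfrac12$ is not reached by your arguments, and your general achievability decoder does not work as described.

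\textbf{The upgrade to $\tfrac12$ fails.} For $\cG$, a packing of the power ball with minimum distance $n^{1/4}$ has only $\log N\approx n\log(\sqrt{nP}/n^{1/4})=\tfrac14 n\log n$ points. A distance decoder cannot support a denser code: $\|y^n-u_i\|_2^2$ contains $\sigma^2\|Z^n\|_2^2$, which fluctuates by $\Theta(\sqrt n)$, so you need $\|u_i-u_j\|_2^2\gg\sqrt n$. Your Gaussian scheme therefore reproduces the typicality rate $\tfrac14$ of Section~\ref{sec:previous_constructions}, and ``both match at $\tfrac12$'' is false. The same obstruction defeats the Bernoulli/Poisson sharpening, because the loss is polynomial. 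An additive statistic over $n$ letters has standard deviation $\Theta(\sqrt n)$, so its mean shift $n\epsilon_n$ must beat $\sqrt n$. A sub-Gaussian tail instead of Chebyshev changes this only by a $\sqrt{\log n}$ factor. A parameter grid of spacing $n^{-1/2+\delta}$ is also not a code, since two grid points differing in one letter are at total squared distance $n^{-1+2\delta}$. The missing idea is a test whose statistic is essentially one-dimensional (the projection of the output onto the codeword direction), so its noise is $O(1)$. This must be paired with a geometry that displaces every other codeword by $\gtrsim\log n$ along that direction, i.e.\ codewords on spheres as in Section~\ref{sec:dishonest}. A single sphere still only gives separation $n^{1/4}$ and rate $\tfrac14$. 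Reaching $\tfrac12$ needs the multi-layer refinement of \cite{CDBW:Gauss}, adapted in \cite{CDBW:Bern_arxiv}.

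\textbf{The decoder for $\tfrac14 d_M$ does not work as described.} With one sample per letter, any statistic $\sum_t g_t(y_t)$ has mean $\sum_t\langle Q_t,g_t\rangle$, which is affine in the competitor's output laws $Q_t$. Its shift $\sum_t\langle Q_t-P_{it},g_t\rangle$ can therefore change sign, and cannot dominate $\sum_t\|\sqrt{Q_t}-\sqrt{P_{it}}\|_2^2$ for all competitors. For the Bernoulli channel ($\widetilde\cX$ a segment), a generic packing puts a polynomial fraction of its codewords in the slab $|\sum_t\langle Q_t-P_{it},g_t\rangle|\lesssim\sqrt n$, and all of them are accepted.

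This is the real difficulty of DI: $\cD_i$ must separate $P_i$ from the convex hull of the competitors, not from each one separately. For instance, $\bigotimes_t\mathrm{Bern}(p_t)$ is exactly the uniform mixture of $\bigotimes_t\mathrm{Bern}(p_t+\Delta s_t)$ over $s\in\{\pm1\}^n$. So your opening claim that any fidelity-separated packing yields a DI code is false; if it held at the converse scale, it would give $\tfrac12 d_M$ in general.

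One repair is a common shell. Keep only codewords with $\sum_t\phi(P_{it})$ equal up to $O(\sqrt n)$, which costs $O(\log n)$ bits by pigeonhole. Take, e.g., $\phi=-H$, with some care where $P_{it}$ has tiny entries. Then the statistic $\sum_t[\nabla\phi(P_{it})](y_t)$ drops in mean by $\sum_t D(Q_t\|P_{it})\ge\sum_t\|\sqrt{Q_t}-\sqrt{P_{it}}\|_2^2$. By linearity this also handles mixtures, and it is the analogue of the sphere. Alternatively, use a hypothesis-testing lemma as in \cite{CDBW:DI_classical}.

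\textbf{Your converse is sound.} Products of per-letter covers at scale $c/\sqrt n$ form cells of total Hellinger diameter $O(c)$, so each cell holds at most one codeword. No constant-composition reduction is needed. To get the \emph{lower} dimension, note that achievability requires codes for all large $n$. It therefore suffices to bound $\log N$ along $n_k\approx c^2/\epsilon_k^2$ with $N_{\mathrm{cov}}(\epsilon_k)\le\epsilon_k^{-\underline{d}_M-\delta}$. Falling back on the upper dimension would not give the stated bound. For $\cG$, cover the power ball of radius $\sqrt{nP}$, not a cube.
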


In this paper, we will build and analyse BC codes for Gaussian channels inspired by a recent capacity-achieving DI code construction \cite{CDBW:EW_simplified, CDBW:Gauss, CDBW:Bern_arxiv}. An initial motivation to try this lies in the difference in the scaling behaviour between discrete and continuous channels that both BC and DI present. Indeed, for DMCs, we have observed that the length of the codes scales linearly in both cases (so the corresponding linear-scale capacity is finite, see Theorems~\ref{thm:BC_DMC} and \ref{thm:DI_DMC}); but, on the other hand, for Gaussian channels $\cG$ (which have continuous input) one observes an infinite value of the linear scale capacity for both commitment and deterministic identification $C_\text{BC}(\cG),C_\text{DI}(\cG)=\infty$ (see \cite{NBSH:commitment_Gauss} and the discussion before Theorem~\ref{thm:DI_continuous}). Additionally the DI capacity is characterised for Gaussian channels in its linearithmic scale $\dot C_\text{DI}(\cG)=\frac12$ (see Theorem~\ref{thm:DI_continuous}). We will show that a BC code can be created via a DI code, obtaining a linearithmic word length scaling, and we will lower bound the corresponding linearithmic BC capacity.

A perhaps more direct motivation stems from the similarity between the two protocols, which we discuss at this point only at an intuitive level. Let the sender and the receiver (Alice and Bob) dispose of a noiseless and a noisy channel $\cW$, and also agree on a deterministic identification code.

\begin{definition}\label{def:DI_code}
An $(n,N,\lambda_1,\lambda_2)$ code for DI over $n$ uses of a noisy channel $\cW:\cX\to\cY$ is a family $\{(u_i,D_i):i\in[N]\}$ of code words $u_i\in\cX^n$ and associated decoding regions $D_i\subset\cY^n$ satisfying, for all $i\neq j\in [N]$,
\[
\cW^n(D_i|u_i)\geq1-\lambda_1, \quad \cW^n(D_j|u_i)\leq\lambda_2.
\]
Here $\lambda_1$ denotes the maximum probability of missed identification (the receiver tests the word sent but gets a negative outcome on the identification test), and $\lambda_2$ bounds the probability of wrong identification (the words sent and tested are different, but the outcome of the identification test is positive). These $\lambda_1$ and $\lambda_2$ vanish as $n\to\infty$ in reliable DI codes.
\end{definition}

In the initial phase, Alice commits to a message $i\in[N]$ and sends the corresponding DI code word $u_i$ through the noisy channel. Later, in the reveal phase, Alice announces her commitment $i$ and Bob checks if the (noisy) sequence he received is in the associated identification region $D_i$. How far is this scheme from a good BC protocol?

First, notice that, after honest behaviour by both parties, the probability that Alice's commitment is rejected by Bob is exactly a missed identification (Bob tests the sent message but the decoder rejects it), which for any reliable DI code is bounded by a vanishing $\lambda_1$. Hence, the BC protocol trivially defined by a DI code is \emph{sound} up to a vanishing error probability $\lambda_1$. 

Second, during the commit phase, Bob receives the noisy output of a DI code word. He might try to guess the word that was sent before the reveal phase. It is a well-known fact in identification that if the receiver tests for all words, he will get, with high probability, multiple positive answers, due to the decoding regions in the identification setting having abundant overlaps with one another although pairwise small. This is the origin of wrong identification errors ($\lambda_2$ above) and, in fact, the starting point for the list-decoding constructions in \cite{Hayashi:commitment_list-decoding}. In other words, the receiver cannot in general learn the sent word reliably before the reveal phase. Commitment, however, requires a stronger notion of concealment: the information that the receiver is able to retrieve after the commit phase concerning the committed message must be small. Can we move from the non-decodability granted by DI codes to the strong concealment needed in BC?

Concerning bindingness, we distinguish two different security models known as \emph{honest but curious} and \emph{dishonest}. In the honest but curious case, the participants follow the exact rules of communication but might try to get an advantage from them. In the dishonest case, the participants do not need to follow the communication rules they agreed upon. In both these cases, the noisy channel parameters are fixed and cannot be manipulated by the users. The paradigms where one of the users might secretly tamper with the channel parameters to increase their cheating chances are known as \emph{unfair noisy settings} \cite{DKS:OT_and_BC,CDN:capacity_UNC,BJMY:noisy_settings_AWGN}.

In the honest but curious case, Alice might try to cheat by sending one code word $u_i$, and revealing a different one $u_j\neq u_i$ as her commitment. However, the probability that Bob accepts false commitment is bounded by the second type of error: $\cW^n(D_j|u_i)\leq\lambda_2$. Hence, the BC protocol trivially defined by a DI code is \emph{binding} up to a vanishing error probability $\lambda_2$.

In the dishonest case, Alice might send an input sequence (not from the code book) intending to cheat by revealing one of two different commitments which Bob might accept, breaking bindingness. This is something that is not contemplated in the DI setting, where the sender is assumed to honestly input code words to the noisy channel. Can we construct a \emph{binding} DI code, that is, a DI code such that any input sequence can only be reliably identified by at most one decoder? Can we then use this modified binding DI code to create a reliable BC code?



\section{Preliminaries}\label{sec:preliminaries}
In this paper we consider memoryless point-to-point communication over $n$ uses of an \emph{additive white Gaussian noise (AWGN)} channel. This channel acts on an input sequence $x^n=x_1,\dots,x_n\in\RR^n$ by adding a random Gaussian sequence $Z^n=Z_1,...,Z_n\in\RR^n$ under noise parameter $\sigma$, with each $Z_i\sim\cN(0,1)$ independently and normally distributed:
\begin{equation}\label{eq:channel_effect}
    Y^n=x^n+\sigma Z^n.
\end{equation}

In other words, letting $I_n$ be the $n\times n$ identity matrix, the channel output probability conditioned on the input sequence $x^n$ is distributed as follows:
\begin{equation}
    \cG_{x^n}:=\cG(\,\cdot\,|x^n)=\cN(x^n|\sigma^2\1).
\end{equation}

The AWGN channel occupies a central role in information theory and communications \cite{AWGN_IT, AWGN_radio, AWGN_wireless}. It serves as the canonical model for thermal noise in electronic systems and provides an accurate abstraction for a wide range of practical communication scenarios. Moreover, many more complex channel models reduce to Gaussian noise models in high-dimensional regimes or after appropriate processing.
Also, from a technical side, this channel has the advantage of the noise being isotropic and independent from the input word, which makes it easy to trace the (sometimes complex) needed properties of the output probability distributions directly to (usually simpler) geometrical properties of the corresponding input sequences.

Since practical scenarios cannot handle arbitrary large signals, Gaussian systems are mostly studied under a power constraint. Here we assume a block (or average) power constraint $P>0$ on the input sequences. In other words, each admissible $x^n$ must satisfy
\begin{equation}\label{eq:power_constraint}
\|x^n\|^2_2\leq nP,
\end{equation}
where $\|x^n\|_2 = \left(\sum_{i=1}^n x_i^2\right)^{1/2}$ is the Euclidean norm. Thus, notice that the available input space is an $n$-dimensional ball of radius $\sqrt{nP}$ centred at the origin $\vec o=(0,\dots,0)$
\begin{equation}
\mathcal{B}_{\vec o}(\sqrt{nP},n) 
:= \left\{ x^n \in \mathbb{R}^n : \|x^n\|_2 \leq \sqrt{nP} \right\}. 
\end{equation}

The Gaussian cumulative distribution function is denoted, for all $x\in\RR$ by
\begin{equation}\label{eq:comulative}
    \Phi(x)=\int_{-\infty}^x\frac{1}{\sqrt{2\pi}}e^{-z^2/2}dz.
\end{equation}

The \emph{total variation distance}, which describes a statistical distance between two (in this paper, continuous) probability distributions, say $P$ and $Q$ on the same continuous alphabet $\cY$, is denoted by
\begin{equation}\label{eq:TVD}
    \|P_Y-Q_Y\|_{\text{TV}}=\frac12\int_{y\in\cY}|P(y)-Q(y)|dy.
\end{equation}

The statistical distance above cannot increase under a stochastic (and therefore, either for a deterministic) mapping. Indeed, let
let $\Psi:\mathcal{Y}\rightarrow\mathcal{Z}$ be a deterministic mapping. Then,
\begin{equation}\label{eq:data-processing-TVD}
\|P_Y,Q_Y\|_{\text{TV}}
\geq \|P_{\Psi(Y)},Q_{\Psi(Y)}\|_{\text{TV}}.
\end{equation}

Let $X$ and $W$ be random variables with discrete alphabets $\mathcal{X}$ and $\mathcal{W}$, respectively. The Shannon entropy of $X$ is 
\begin{equation}
H(X)=-\sum_{x\in\mathcal{X}} p_X(x)\log p_X(x),
\end{equation}
where the $\log$ and $\exp$ are taken in base $2$ throughout the paper, and the sum should turn to an integral for continuous alphabets. The conditional Shannon entropy of $X$ given $W$ is:
\begin{equation}
\begin{split}
H(X|Y)
&= \sum_{w\in\cW} p_W(w)\, H(X| W=w)\\
&= -\sum_{x\in\cX}\sum_{w\in\cW}
p_{X,W}(x,w)\log p_{X|W}(X|w).
\end{split}
\end{equation}
In the context of communication, $H(X|W)$ is referred to as the \emph{equivocation} of the channel, and it has already appeared in Theorem~\ref{thm:BC_DMC} as the object ruling the BC capacity of DMCs.

The mutual information, which quantifies the amount of information shared by $X$ and $Y$, is denoted by
\begin{equation}\label{eq:mutual_information}
\begin{split}
I(X;Y)
&=\sum_{x\in\mathcal{X}}\sum_{y\in\mathcal{Y}}
p_{X,Y}(x,y)
\log\frac{p_{X,Y}(x,y)}{p_X(x)p_Y(y)}\\
&=H(X)-H(X|Y).
\end{split}
\end{equation}
It has already appeared in Theorem~\ref{thm:RID} as the quantity that rules the (linear-scale) capacity of transmission and the (exponential-scale) randomised identification capacity. Specifically, for an AWGN channel with variance $\sigma^2$ and average power constraint $P$, the capacity is \cite{Shannon:TheoryCommunication, Shannon:noise, Burnashev:ID_gauss, HanVerdu:ID}:
\begin{equation}
\!\!\!C_\text{T}(\cG)=\ddot C_\text{ID}(\cG)=\!\!\!\!\sup_{p_X:\EE X^2\leq P}\!\!\!\! I(X;Y)=\frac12\log\!\left(1+\frac{P}{\sigma^2}\right)\!.
\end{equation}
In contrast, and as established in Theorem~\ref{thm:DI_continuous}, the DI capacity in the linearithmic scale depends instead on a certain geometrical aspect of the output probability space and, for an AWGN channel, it is $\dot C_\text{DI}(\cG)=\frac12$ \cite{DI-fading,CDBW:Gauss}.

We will also use the \emph{conditional min-entropy} of $X$ (in a discrete alphabet $\cX$) given $Y$ (in a continuous alphabet $\cY$) with density function $p_Y$. It is defined as
\begin{align}
H_{\min}(X|Y)=-\log\int_{y\in\cY}p_{Y}(y)\max_{x\in\cX}p_{X|Y}(x|y)dy.
\end{align}

The vector in $\RR^n$ whose elements correspond to each term (letter) of the sequence $x^n$ is simply denoted $\vec x$. Working in this dual vector notation will sometimes alleviate the notation and help with the intuition of our very geometrical construction.
\medskip

The DI codes proposed in \cite{CDBW:Gauss,CDBW:Bern_arxiv}, a modification of which will constitute our construction for BC, crucially depend on projection properties between sequences. 
Let $\vec a,\vec v\in\RR^n$ with $\|\vec a\|>0$. If $\vec a$ is centred at the origin, we define the \emph{orthogonal projection} of $\vec v$ onto $\vec a$ is the linear map
\begin{equation}\label{eq:proj_definition}
\Pi_{\vec a}\,\vec v := \frac{(\vec a,\vec v)}{\|\vec a\|^2}\vec a,
\end{equation}
with $(\vec a,\vec v)=\sum_{i=1}^n a_i v_i$ the Euclidean inner product. 
%
%
The projection orthogonal of random Gaussian noise vectors onto fixed directions satisfies the following key property.

\begin{proposition}\label{prop:small_projection}
    Let $\vec{u}\in\RR^n$ be a vector in a high dimensional space and $\vec{Z}=(Z_1,\dots,Z_n)$ a random noise vector where $Z_1,\dots,Z_n\sim\cN(0,1)$ are i.i.d. standard normal random variables. The projection of the noise vector onto  $\vec{u}$ satisfies:
    \[
    \Pr\left\{\|\Pi_{\vec{u}}\vec{Z}\|_2\geq x\right\}=2\Phi(-x)\leq\sqrt{\frac{2}{\pi}}\frac{1}{x}e^{-x^2/2}.
    \]
\end{proposition}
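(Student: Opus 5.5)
The plan is to reduce the statement to a one-dimensional Gaussian fact via rotation invariance, and then apply the standard Mills-ratio tail bound. By the definition \eqref{eq:proj_definition}, the projection is $\Pi_{\vec u}\vec Z = \frac{(\vec u,\vec Z)}{\|\vec u\|^2}\vec u$. Its Euclidean norm is therefore
\[
\|\Pi_{\vec u}\vec Z\|_2=\frac{|(\vec u,\vec Z)|}{\|\vec u\|_2}=|(\hat u,\vec Z)|,
\qquad \hat u:=\vec u/\|\vec u\|_2 .
\]
The first key step is to identify the law of $S:=(\hat u,\vec Z)=\sum_i \hat u_i Z_i$. It is a linear combination of independent standard normals, so it is Gaussian with mean $0$ and variance $\sum_i\hat u_i^2=1$, that is, $S\sim\cN(0,1)$. (Equivalently, $\vec Z$ is rotation invariant, so one may rotate $\hat u$ onto the first basis vector.) This step is where the dimension $n$ disappears, and it is the only point at which the hypothesis $\|\vec u\|>0$ is needed.

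Next, by the symmetry of the standard normal, and assuming $x\ge0$,
\[
\Pr\{|S|\ge x\}=\Pr\{S\ge x\}+\Pr\{S\le -x\}=2\Phi(-x),
\]
using \eqref{eq:comulative} together with $1-\Phi(x)=\Phi(-x)$. This gives the equality in the statement.

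For the inequality, for $x>0$ I use the usual trick $1\le z/x$ on $[x,\infty)$:
\[
\Phi(-x)=\int_x^\infty\frac{e^{-z^2/2}}{\sqrt{2\pi}}\,dz\le\frac1x\int_x^\infty\frac{z\,e^{-z^2/2}}{\sqrt{2\pi}}\,dz=\frac{1}{x\sqrt{2\pi}}e^{-x^2/2}.
\]
Multiplying by $2$ yields $\sqrt{2/\pi}\,\frac1x e^{-x^2/2}$. None of these steps is a real obstacle. The only care needed is to note that the bound is meaningful only for $x>0$, while for $x\le0$ the probability is trivially $1$.
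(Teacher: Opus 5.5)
Your proof is correct and complete. The paper states this proposition without proof, as a standard fact carried over from its earlier DI works, and your argument is the standard justification behind it: $\|\Pi_{\vec u}\vec Z\|_2=|(\hat u,\vec Z)|$ with $(\hat u,\vec Z)\sim\cN(0,1)$, then symmetry, then the Mills-ratio bound. You are right to make explicit the conditions the statement leaves implicit: $\|\vec u\|_2>0$, $x\ge 0$ for the equality, and $x>0$ for the inequality. For $x<0$ one has $2\Phi(-x)>1$, so the stated equality would fail.
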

\medskip

A hash function is a deterministic mapping from a large input domain to a shorter binary string. Since many inputs are mapped to a smaller output space, collisions are unavoidable; a $2$-universal family guarantees that, for any two distinct inputs, a randomly chosen hash function makes them collide with low probability. To be precise, let $\cH$ be a finite set and $r\geq 1$. A family of hash functions $\mathcal{F}\subseteq\{f:\cH\rightarrow\{0,1\}^r\}$ is called \emph{$2$-universal} if, for every pair of distinct inputs $h,h'\in\cH$,
\begin{equation}
\Pr\{f(h)=f(h')\}\leq 2^{-r},
\end{equation}
where $f$ is chosen uniformly at random from $\mathcal{F}$. A $2$-universal hash function can be used to extract an approximately uniform random string from a source with sufficiently large min-entropy, even when the hash function itself is revealed. This is formalized by the following lemma.

\begin{lemma}[Leftover Hash Lemma \cite{LHL_original,BBCM:privacy_ampl}]\label{lemma:LHL}
Let $P_{HB}$ be an arbitrary joint distribution over $\cH\times\cB$, denote by $F$ a uniformly random element of a $2$-universal family of hash functions $\mathcal{F}\subseteq\{f:\cH\rightarrow\{0,1\}^r\}$, and let $U_r$ be uniformly distributed over $\{0,1\}^r$ and independent of $(F,B)$.
Then,
\[
\left\|P_{F(H),F,B}-P_{U_r}P_{F,B}\right\|_{\text{TV}} \leq \frac{1}{2}2^{-\frac{1}{2}\left(H_{\min}(H|B)-r\right)}.
\]
\end{lemma}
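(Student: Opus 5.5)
The statement is the Leftover Hash Lemma (Lemma~\ref{lemma:LHL}). I would prove it with the standard collision-probability argument: first for each fixed value of the side information $B=b$, then averaged over $b$. The integral over a continuous $\cB$ is handled exactly like a sum, because only densities $p_B(b)$ appear as weights.

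\emph{Step 1 (fix $b$).} Let $Q_b$ be the distribution of $(F(H),F)$ given $B=b$, on the set $\{0,1\}^r\times\cF$ of size $2^r|\cF|$. Let $U$ be the uniform distribution on this set. Since $F$ is independent of $(H,B)$, the target distance equals
\[
\int p_B(b)\,\|Q_b-U\|_{\text{TV}}\,db .
\]
By Cauchy--Schwarz, $\sum|Q_b-U|\le\sqrt{2^r|\cF|}\,\|Q_b-U\|_2$. Expanding the square gives $\|Q_b-U\|_2^2=\mathrm{Col}(Q_b)-\frac{1}{2^r|\cF|}$, where $\mathrm{Col}$ denotes the collision probability.

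\emph{Step 2 (collision bound).} Take two independent copies $(H,F)$ and $(H',F')$ given $B=b$. Then
\[
\mathrm{Col}(Q_b)=\Pr\{F=F'\}\Pr\{F(H)=F(H')\}=\tfrac{1}{|\cF|}\bigl(\Pr\{H=H'\}+\Pr\{H\neq H',\,F(H)=F(H')\}\bigr).
\]
By $2$-universality the second term is at most $2^{-r}$. Hence $\|Q_b-U\|_2^2\le \Pr\{H=H'|b\}/|\cF|$, and so
\[
\|Q_b-U\|_{\text{TV}}\le \tfrac12\sqrt{2^r\,\Pr\{H=H'|b\}}\le \tfrac12\sqrt{2^r\max_h p_{H|B}(h|b)}.
\]
The last inequality uses $\sum_h p(h|b)^2\le\max_h p(h|b)$.

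\emph{Step 3 (average over $b$).} Integrate against $p_B(b)$ and use Jensen's inequality for the concave square root. This gives
\[
\tfrac12\sqrt{2^r\int p_B(b)\max_h p_{H|B}(h|b)\,db}=\tfrac12\,2^{-\frac12(H_{\min}(H|B)-r)},
\]
which is exactly the definition of $H_{\min}$ in the preliminaries.

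The only delicate point is Step 2: the diagonal term $H=H'$ must be kept separate from the off-diagonal terms, so that the uniform contribution $2^{-r}/|\cF|$ cancels exactly against $\frac{1}{2^r|\cF|}$. Measurability in the continuous-$\cB$ case is routine.
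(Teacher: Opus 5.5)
Your proof is correct. It is the standard collision-probability argument (Cauchy--Schwarz, then $2$-universality to cancel the uniform term, then Jensen to pass to the averaged conditional min-entropy used in the paper) that underlies the references cited for this lemma. The paper gives no proof of its own and simply imports the result from those references, so there is no further route to compare.
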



We finalize the preliminaries section with the general definition of a commitment code over an AWGN channel $\cG$.

\begin{definition}\label{def:commitment}
An $(n,M,\epsilon_c,\epsilon_b,\epsilon_s)$ \emph{commitment code} over $\mathcal{G}$ is a two-party interactive protocol between a committer (Alice) and a receiver (Bob) with the following phases and properties.

\textbf{1. Commit phase:}
For every commitment $m\in[M]$, the channel $\mathcal{G}$ is used $n$ times. Between any two uses of the noisy channel, the parties may exchange an arbitrary finite amount of noiseless communication. All parties may use private randomness.
We denote by the random variable $V_b$ the receiver's complete view at the end of the commit phase, including all channel outputs, all noiseless communication, and all of the receiver's private randomness. When the sender honestly commits to $m$, the resulting distribution of $V_b$ is denoted by $P_{V_b|m}$.

\textbf{2. Reveal phase and verification:}
After the commit phase, the sender reveals the committed message $m$ by providing an opening $O$ to the receiver. The receiver applies a verification rule $\mathsf{Acc}(V_b,m,O)\in\{0,1\}$, where $\mathsf{Acc}(V_b,m,O)=1$ means that the receiver accepts $m$ as a valid commitment.

\textbf{a) $\epsilon_c$-concealing code:}
For every pair of $m,m'\in[M]$,
\[
\left\|P_{V_b|m}-P_{V_b|m'}\right\|_{\mathrm{TV}}
\leq \epsilon_c.
\]

\textbf{b) $\epsilon_b$-binding code:}
For every committer strategy sending an input sequence $x^n$ through the noisy channel (not necessarily from the DI code in the dishonest case) and generating the openings $O_m$ and $O_{m'}$ for any pair of messages $m,m'\in[M]$,
\[
\Pr\left\{
    \mathsf{Acc}(V_{b},O_m|x^n)=1
    \ \cap\
    \mathsf{Acc}(V_{b},O_{m'}|x^n)=1\right\}
\leq \epsilon_b.
\]

\textbf{c) $\epsilon_s$-sound code:}
For every commitment $m\in[M]$ and honest behaviour by the sender,
\[
\Pr\left\{\mathsf{Acc}(V_b,O_m\mid m)=1
\right\}
\geq 1-\epsilon_s.
\]
\end{definition}

\begin{remark}
We assume that the participants are \emph{rational}, in the sense that they are interested in completing the commitment task according to its objective. Thus, a participant may deviate from the prescribed protocol whenever doing so may improve their outcome, but not merely to prevent the task from being completed. Without this assumption, commitment cannot be meaningfully guaranteed, since the sender could always provide a wrong opening such that the receiver always discards the commitment, and the receiver could always declare that the commitment has not been accepted in spite of the verification test, trivially defeating the task by refusing to complete it.
\end{remark}
\medskip


The paper is organized as follows. In Section~\ref{sec:strategy_concealment} we present our commitment protocol based on a DI code and the concealing property that stems from it. These parts are common in the honest but curious and dishonest cases. 

In Section~\ref{HBC} we study the honest but curious security setting when the strategy in Section~\ref{sec:strategy_concealment} is used. In Subsection~\ref{ssec:binding_HBC}, we analyse bindingness and, in Subsection~\ref{ssec:soundness_HBC}, soundness. The performance of the resulting commitment code and a lower bound on the honest but curious security setting (using the DI codes proposed in \cite{CDBW:Gauss} as building block) are provided in Subsection~\ref{ssec:DI_to_BC_HBC}

The main results on the (more general) dishonest security setting can be found in Section~\ref{sec:dishonest}. We will need a modification of the projection based codes in \cite{CDBW:EW_simplified,CDBW:Gauss} to show bindingness in Subsection~\ref{ssec:binding_dishonest}, and the soundness of the modified code is studied in Subsection~\ref{ssec:sound_dishonest}. The size of the BC code and the rate properties in the asymptotic regime, leading to the main theorem, are studied in Subsection~\ref{ssec:DI_to_BC_dishonest}. 

Section~\ref{sec:previous_constructions} shows how previous commitment code constructions over AWGN channels were actually (hiddenly) also based on typicality-based (suboptimal) DI codes, showcasing the deep relation between DI and commitment. 

We conclude in Section~\ref{sec:conclusions} with a summary, final remarks and open problems.
\medskip


\section{Proposed BC protocol and its concealment}\label{sec:strategy_concealment}
Assume that sender and receiver are connected by a noisy AWGN channel as well as by a noiseless communication link. They agree on an $(n,N,\lambda_1,\lambda_2)$-DI code over $\cG$. This code maps binary vectors $w^k\in\{0,1\}^k$ into the real-valued $n$-dimensional code words $u_i\in\cX^n$ for $i\in[N]$ and $N=2^k$. The linearithmic rate of the DI code is $\dot R_\text{DI}(n)=k/n\log n$. 
The sender commits to one of $M$ messages, each of which is deterministically assigned a different binary string $a^r\in \{0,1\}^r$. Notice that $M=2^r$ and the linearithmic rate of commitment is $\dot R_\text{BC}(n)=r/n\log n$. 

\begin{tcolorbox}[title=Commitment Protocol]
\textbf{Commit Phase:}
\begin{enumerate}
    \item The sender samples a binary string $w^k$ at random, computes its corresponding DI code word $u_i$, and sends it through the AWGN channel.
    \item The sender picks $f:\!\{0,1\}^k\!\to\!\{0,1\}^r$ uniformly at random from a family of two-universal hash functions, computes $f(w^k)$, and sends $z^r=f(w^k)\oplus a^r$ (with $\oplus$ the bitwise \textsc{xor}) and a description of $f$ through the noiseless channel. Thus committing to the bit string $a^r$.
    \item The receiver obtains a noisy output sequence $y^n\in\cY^n$, a noiseless description of the hashing function used, and the sequence $z^r$. 
\end{enumerate}
  \tcblower
\noindent\textbf{Reveal Phase:}
\begin{enumerate}
    \item The sender shares $w^k$ to the receiver through the noiseless channel.
    \item The receiver computes the DI code word $u_i$ and performs the corresponding identification test $\cD_i$ to the noisy sequence $y^n\in\cY^n$ it had received.
    \item If the output is accepted by the identification test, the receiver computes $a^r=z^r\oplus f(w^k)$ and accepts it as a valid commitment.
\end{enumerate}
\end{tcolorbox}

\subsection{The proposed protocol is concealing}\label{ssec:concelaing}
DI has an intrinsic weak sense of concealment by itself. Since there are many overlaps between the decoding regions (although pairwise small), the output sequence $y^n$ of the channel will land, with high probability, inside the decoding regions corresponding to several messages. Therefore, if the receiver tries to guess the DI code word (thus gaining information about the commitment) before the reveal phase by applying an identification test for every $i\in[N]$, they will get many positive answers (many candidates for the sent sequence). 

This can also be intuitively seen by remembering that the DI codes that we aim to use are much larger (they have a linearithmic scale) than any transmission code (which can just have linearly scaling message length).
By the strong converse results in transmission, one can not reliably decode more than $2^{nC_T}$ messages (with $C_T$ the capacity of the channel used), and therefore, it is not possible to reliably guess which out of the $2^{O(n\log n)}$ messages from a DI code was sent.
\medskip

However, the concealment notion needed for BC is a stronger one: the output distributions corresponding to two different commitments must be similar, see in Definition~\ref{def:commitment}a). Our plan consists in using the leftover hashing lemma (Lemma~\ref{lemma:LHL}) to bound the similarity between output distributions. 

Let $W^k$ be the random variable that uniformly samples $w^k\in\{0,1\}^k$. In the commitment construction above, each of these strings is assigned to a DI code word $x^n$. The random variable representing the output probability distribution of the noisy channel when this $x^n$ is input is defined to be $Y^n$. We start with the following min-Entropy bound from \cite{NBSH:commitment_Gauss}:


\begin{theorem}[Privacy amplification with continuous side information \cite{NBSH:commitment_Gauss}]\label{thm:privacy_ampl_continuous}
Given an AWGN channel $\cG$ with transmission capacity $C_\text{T}(\cG)$ and the commitment construction above, we have, for any $\delta>0$ and large enough $n$,
\[
H_{\min}(W^k|Y^n)\geq 
k-n\left[C_\text{T}(\cG)+\delta\right].
\]
\end{theorem}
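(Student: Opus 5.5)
The plan is to turn the min-entropy into a guessing probability and bound that probability by a single integral over output space. Since $W^k$ is uniform on $\{0,1\}^k$ and $p_{Y^n|W^k}(y|w)=\cG(y|u_{i(w)})$, the definition of $H_{\min}$ gives
\[
2^{-H_{\min}(W^k|Y^n)}=\int_{\RR^n}\max_{w}p_{W^k}(w)\,p_{Y^n|W^k}(y|w)\,dy
=2^{-k}\int_{\RR^n}\max_{i\in[N]}\cG(y|u_i)\,dy .
\]
The statement is therefore equivalent to
\[
\int_{\RR^n}\max_{i}\cG(y|u_i)\,dy\le 2^{n[C_\text{T}(\cG)+\delta]} .
\]
The quantity $k$ cancels out entirely, so the bound must hold for every family of code words in $\cB_{\vec o}(\sqrt{nP},n)$, however large $N$ is.

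\textbf{Step 1 (change of measure).} Fix the reference output density $Q=\cN(\vec o,(P+\sigma^2)\1)$ and a threshold $\gamma=n[C_\text{T}(\cG)+\delta/2]$. Split $\RR^n$ into the region where $\max_i\log\frac{\cG(y|u_i)}{Q(y)}\le\gamma$ and its complement. On the first region, $\max_i\cG(y|u_i)\le 2^{\gamma}Q(y)$, so that part of the integral is at most $2^{\gamma}$. This is the usual meta-converse, or strong-converse, step.

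\textbf{Step 2 (the complement).} On the complement I will use that all code words lie in the power ball. This gives the pointwise envelope
\[
\max_i\cG(y|u_i)\le(2\pi\sigma^2)^{-n/2}\exp\!\big(-d(y)^2/2\sigma^2\big),
\]
where $d(y)=\max\{0,\|y\|_2-\sqrt{nP}\}$. The integral of this radial envelope over the complement is then computed in polar coordinates. Writing $\|y\|_2=\sqrt n\,\rho$ and using Laplace's method, its exponent is the maximum of
\[
\tfrac12\log\!\big(\rho^2/\sigma^2\big)+\tfrac12\log e-\frac{(\rho-\sqrt P)^2}{2\sigma^2}\log e
\]
over the admissible $\rho$, up to $o(n)$ terms. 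The concentration of $\|Z^n\|_2$ and the projection bound of Proposition~\ref{prop:small_projection} would be used to control the cross term $(\vec u_i,\vec Z)$ when deciding which $\rho$ remain admissible.

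\textbf{Step 3 (where this plan may fail).} I expect Step 2 to be the main obstacle, and I am not convinced it closes. The trouble is that at typical radii $\rho^2\approx P+\sigma^2$ the envelope ratio already exceeds $2^{nC_\text{T}(\cG)}$ by roughly $\tfrac12\log e-(\sqrt{P+\sigma^2}-\sqrt P)^2/(2\sigma^2)$ per channel use. For a code dense in the ball, which is exactly the DI regime with $\log N\sim n\log n$, the complement is therefore not a set of small $Q$-measure, and the envelope is essentially attained. My first attempt would be to optimise the variance of $Q$ instead of fixing it at $P+\sigma^2$, and to bound $\int\max_i\cG(y|u_i)\,dy$ by the $\sigma$-enlarged volume of the ball. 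This yields a constant $C'(P,\sigma)$ in place of $C_\text{T}(\cG)$, with $C'>C_\text{T}(\cG)$ for large $P/\sigma^2$.

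\textbf{Consequence for the statement.} If the optimised bound cannot be pushed down to $C_\text{T}(\cG)+\delta$, I will record the result with $C'$ in place of $C_\text{T}(\cG)$. This weaker form is still linear in $n$ and hence negligible against $k=\Theta(n\log n)$, which is all that Lemma~\ref{lemma:LHL} needs downstream. The exact constant $C_\text{T}(\cG)$ would then rest on the cited argument of \cite{NBSH:commitment_Gauss}.
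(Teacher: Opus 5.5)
Your reduction $2^{-H_{\min}(W^k|Y^n)}=2^{-k}\int\max_i\cG(y|u_i)\,dy$ is exact. The obstruction you met in Step~3 is not a weakness of your method: it refutes the inequality as stated. The paper gives no proof, only the citation, so you should not defer to \cite{NBSH:commitment_Gauss} for the constant $C_\text{T}(\cG)$.

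Here is why. The packing lower bound on $N$ is realised by maximal $\tilde\theta$-packings of the sphere of radius $r=\sqrt{nP}$. These are valid codes for the construction, and they are also $\tilde\theta$-coverings: every $x^*$ on the sphere has a code word $u$ with $\|x^*-u\|_2^2=O(\sqrt n\log n)=o(n)$. If $x^*$ is the radial projection of $y$, then $\|y-u\|_2^2=(\|y\|_2-r)^2+\frac{\|y\|_2}{r}\|x^*-u\|_2^2$. So on shells $\|y\|_2=\sqrt n\rho$ with bounded $\rho$,
\[
\max_i\cG(y|u_i)\ge 2^{-o(n)}(2\pi\sigma^2)^{-n/2}e^{-(\|y\|_2-r)^2/(2\sigma^2)},
\]
and your envelope is attained. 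Evaluating your Laplace exponent at $\rho^2=P+\sigma^2$ then gives, with $s=P/\sigma^2$,
\[
\log\int_{\RR^n}\max_i\cG(y|u_i)\,dy\ \geq\ n\Big[C_\text{T}(\cG)+\big(\sqrt{s(1+s)}-s\big)\log e\Big]-o(n).
\]
This excess is linear in $n$ for every $s>0$, not only at high SNR. Hence $H_{\min}(W^k|Y^n)\le k-n[C_\text{T}(\cG)+c]$ for some constant $c>0$, and the statement fails for every $\delta<c$. This is the familiar linear gap between min-entropy and Shannon entropy. Already for a BSC with the full code $\{0,1\}^n$ one has $H_{\min}(W|Y)=-n\log(1-p)<nh(p)=k-nC$, with $h$ the binary entropy.

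There are two correct repairs, and you essentially have both.

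\emph{Your fallback.} This is a valid theorem and needs no choice of $Q$. Since $\|y-u_i\|_2\ge d(y)$ for every code word, $\int\max_i\cG(y|u_i)\,dy$ is at most the envelope integral, which equals $2^{nC'+O(\log n)}$ with $C'$ the maximum over $\rho$ of your Laplace exponent. So $H_{\min}(W^k|Y^n)\ge k-nC'-O(\log n)$ for every code in the ball, with $C'>C_\text{T}(\cG)$ for all $s>0$. Only $o(n\log n)$ matters downstream, so Theorem~\ref{thm:concealment} and every rate statement survive with $C_\text{T}(\cG)$ replaced by $C'$, including the super-exponential $\epsilon_c$.

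\emph{Smooth min-entropy.} The constant $C_\text{T}(\cG)$ is correct for the smooth min-entropy, and your Step~1 already proves it once you discard the complement instead of bounding it. Restrict the joint density to the set where $\log\frac{\cG(y|u_w)}{Q(y)}\le n[C_\text{T}(\cG)+\delta]$. For $\|x\|_2^2\le nP$ this log-ratio has mean at most $nC_\text{T}(\cG)$ and exponentially concentrated fluctuations of order $\sqrt n$. So the discarded mass is $\epsilon=e^{-\Omega(n)}$, and $H^{\epsilon}_{\min}(W^k|Y^n)\ge k-n[C_\text{T}(\cG)+\delta]$. However, the smoothed leftover hash lemma adds a term of order $\epsilon$ to the concealment error. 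This route therefore only gives $\epsilon_c=e^{-\Omega(n)}$ rather than $2^{-\frac12\gamma n\log n}$, so for this paper the $C'$ version is the better fix.
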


Substituting the bound in the theorem above into the leftover hashing lemma produces the following:
\begin{align}
\!\!\!\!\left\|P_{F(W^k),F,Y^n}\!-\!P_{U_r}P_{F,Y^n}\right\|_{\text{TV}}\! 
&\leq \frac{1}{2}2^{-\frac{1}{2}\left(H_{\min}(W^k|Y^n)-r\right)}\nonumber\\
&\leq\frac{1}{2}2^{-\frac{1}{2}\left[k-n\left(C_\text{T}(\cG)+\delta\right)-r\right]}\!.\label{eq:concealment_step1}
\end{align}

Now, since the message $a^r$ is chosen independently of $W^k$ and $F$, the distributions $P_{F(W^k),F,Y^n}$, $P_{U_r}$ and $P_{F,Y^n}$ do not depend on $a^r$. Hence, by the data-processing property of the total variation distance, see Eq.~\eqref{eq:data-processing-TVD}, we have that
\begin{equation}\begin{split}\label{eq:concealment_step2}
   & \left\|P_{F(W^k),F,Y^n}\!-\!P_{U_r}P_{F,Y^n}\right\|_{\text{TV}}\\
    &\qquad\geq\left\|P_{a^r\oplus F(W^k),F,Y^n\mid a^r}\!-\!P_{U_r\mid a^r}P_{F,Y^n\mid a^r}\right\|_{\text{TV}}.
\end{split}\end{equation}
Notice now that $P_{a^r\oplus F(W^k),F,Y^n\mid a^r}$ is exactly the output probability distribution of Bob's view after $a^r$ is sent, that is $P_{V_b\mid a^r}$, and also that $P_{U_r\mid a^r}P_{F,Y^n\mid a^r}=P_{U_r}P_{F,Y^n}$. Repeating the argument for any different commitment $\tilde a^r$ we get, by the triangular inequality, that
\begin{align}
&\!\!\!\left\|P_{V_b|a^r}-P_{V_b|\tilde a^r}\right\|_{\text{TV}}\nonumber\\
&\leq \left\| P_{V_b|a^r}-P_{U_r}P_{F,Y^n}\right\|_{\text{TV}} + \left\|P_{U_r}P_{F,Y^n}-P_{V_b|\tilde a^r}\right\|_{\text{TV}}\nonumber\\
&\stackrel{(b)}{\leq} 2^{-\frac{1}{2}\left[k-n\left(C_\text{T}(\cG)+\delta\right)-r\right]}\nonumber\\
&\stackrel{(c)}{\leq} 2^{-\frac{1}{2}\gamma n\log n},
\end{align}
where in $(b)$ we used Equations \eqref{eq:concealment_step1} and \eqref{eq:concealment_step2}, and $(c)$ is true as long as $\gamma \leq \dot R_\text{DI}(n)-\dot R_\text{BC}(n) -\frac{C_T(\cG)+\delta}{\log n}$. We have shown: 

\begin{theorem}\label{thm:concealment}
For any $\delta,\gamma>0$, the proposed commitment protocol is $\epsilon_c$-concealing with
\(
\epsilon_c=2^{-\frac12\gamma n\log n}
\)
as long as
\[\hspace{40pt}\dot R_\text{BC}(n)\leq \dot R_\text{DI}(n)-\gamma -\frac{C_T(\cG)+\delta}{\log n}.\hspace{40pt}\qed\]
\end{theorem}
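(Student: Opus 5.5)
The plan is to bound Bob's entire commit-phase view by a single reference distribution that does not depend on the committed string. Bob's view after honest commitment to $a^r$ is $V_b=(a^r\oplus F(W^k),F,Y^n)$, so the natural reference is $P_{U_r}P_{F,Y^n}$, which involves neither $a^r$ nor any other commitment. If every $P_{V_b|a^r}$ is close to this reference, the triangle inequality will bound the distance between any two of them by twice that amount. The ingredients are the min-entropy bound of Theorem~\ref{thm:privacy_ampl_continuous}, the Leftover Hash Lemma (Lemma~\ref{lemma:LHL}) applied with $H=W^k$ and side information $B=Y^n$, and the data-processing inequality \eqref{eq:data-processing-TVD}.

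The key steps, in order, are as follows.
\begin{enumerate}
\item Apply Lemma~\ref{lemma:LHL} with $H=W^k$ and $B=Y^n$, and insert $H_{\min}(W^k|Y^n)\geq k-n[C_\text{T}(\cG)+\delta]$. This shows that $(F(W^k),F,Y^n)$ is within $\frac12 2^{-\frac12(k-n(C_\text{T}+\delta)-r)}$ of $(U_r,F,Y^n)$, as in \eqref{eq:concealment_step1}.
\item Apply the bijection $s\mapsto s\oplus a^r$ to the first coordinate. The second and third coordinates are untouched by it. Because $a^r$ is fixed and independent of $(W^k,F)$, this map sends the first distribution to $P_{V_b|a^r}$. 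Since $U_r\oplus a^r$ is again uniform, it sends the second distribution back to $P_{U_r}P_{F,Y^n}$. Data processing (in fact equality, since the map is a bijection) then transfers the bound of step 1 to $\|P_{V_b|a^r}-P_{U_r}P_{F,Y^n}\|_{\text{TV}}$.
\item Do the same for any other commitment $\tilde a^r$ and apply the triangle inequality. This gives $\|P_{V_b|a^r}-P_{V_b|\tilde a^r}\|_{\text{TV}}\leq 2^{-\frac12[k-n(C_\text{T}+\delta)-r]}$.
\item Convert to linearithmic rates using $k=\dot R_\text{DI}(n)\, n\log n$ and $r=\dot R_\text{BC}(n)\, n\log n$. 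The exponent becomes $n\log n\,\bigl[\dot R_\text{DI}-\dot R_\text{BC}-\frac{C_\text{T}+\delta}{\log n}\bigr]$, and under the stated rate condition this is at least $\gamma n\log n$. This yields $\epsilon_c=2^{-\frac12\gamma n\log n}$.
\end{enumerate}

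The main obstacle is checking that Theorem~\ref{thm:privacy_ampl_continuous} really applies to the present code. The min-entropy bound must hold for a uniformly random $W^k$ encoded through the DI code (which need not be injective onto distinct power-constrained inputs) with continuous side information $Y^n$. The essential point is that, conditioned on any $y^n$, the guessing probability is at most $2^{-k}$ times the maximal likelihood ratio, and the effective channel uses are capped at $2^{n(C_\text{T}+\delta)}$ by the power constraint. I would take the theorem as given, noting only that a non-injective encoder would still lower the guessing entropy only through collisions, which a DI code rules out because its code words are distinct.

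A secondary subtlety is that $n$ must be large enough for that theorem to hold. This is absorbed by the phrase ``for large enough $n$'' in Theorem~\ref{thm:privacy_ampl_continuous}. It is also harmless because the correction term $\frac{C_\text{T}+\delta}{\log n}$ vanishes in the linearithmic scale.
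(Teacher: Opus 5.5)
Your proposal is correct and follows the paper's proof essentially step for step. You apply the Leftover Hash Lemma with the min-entropy bound of Theorem~\ref{thm:privacy_ampl_continuous}, transfer the bound to $P_{V_b|a^r}$ versus the $a^r$-independent reference $P_{U_r}P_{F,Y^n}$ via the XOR map, use the triangle inequality, and convert to linearithmic rates. Your remark that the XOR map is a bijection fixing $P_{U_r}P_{F,Y^n}$, so data processing holds with equality, is just a cleaner phrasing of the paper's step \eqref{eq:concealment_step2}.
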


We study next the honest but curious and the dishonest security settings. Both these scenarios will follow the proposed protocol and must satisfy the derived concealment condition above. The only difference lies on the properties of the DI code used. Indeed, in the dishonest case, modifications have to be applied to ensure bindingness. 
\medskip


\section{Honest but curious security setting}\label{HBC}
In the honest but curious security setting, the two parties can try to cheat but have to follow the protocol guidelines. This concept was first introduced in cryptography under the name ``semi-honest scenario'' in \cite{GMW:semi-honest}. Ahlswede and Csiszár coined the term ``honest but curious'' in a more information theoretical environment \cite{AC:honest-curious}. This section will also serve us as a warm up for the general results in the next section.

Applied to the commitment protocol proposed in the previous section, an honest but curious sender is forced to input a valid DI code word, say $u_i$, to the noisy channel. This word is tied to a particular binary string $w^k$. They also need to pick a hashing function $f$ and send $z^r=f(w^k)\oplus a^r$ and a description of $f$ through the noiseless channel, thus committing to the message $a^r$. Up to here, the sender is forced to be honest. During the reveal phase, however, the curious sender might try to transmit a different $\tilde w^k$ (related to a different DI code word $u_j\neq u_i$ and therefore revealing a false commitment $\tilde a^r\neq a^r$) through the noiseless channel, hoping that Bob will accept the false $\tilde a^r$ as a valid commitment, thus breaking the bindingness property. Can this cheating strategy succeed?

\subsection{DI codes are binding in the honest but curious setting}\label{ssec:binding_HBC}

Let us take any reliable $(n,N,\lambda_1,\lambda_2)$-DI code over an AWGN channel with code words and their corresponding identifiers (decoders) denoted by $u_i$ and $\cD_i$ for $i\in[N]$, respectively.
We need to calculate, for any pair of commitments $a^r$ and $\tilde a^r$, the probability that $\mathsf{Acc}(V_{b},O_{a^r}|u_i)=1$ and $\mathsf{Acc}(V_{b},O_{\tilde a^r}|u_i)=1$ simultaneously hold. 

The verifiers in the commitment protocol proposed in Section~\ref{sec:strategy_concealment} are exactly the identification tests $\cD_i$. Hence, the probability that the commitment $\tilde a^r$, related through the hashing to the binary string $\tilde w^k$, and hence to the DI word $u_j$ with decoder $\cD_j$, is accepted is given, by the definition of a reliable DI code (see Definition~\ref{def:DI_code}), by 
\begin{equation}
\Pr\{\mathsf{Acc}(V_{b},O_{\tilde a^r}|u_i)\!=\!1\}=\cG^n(D_j\mid u_i)\leq\lambda_2.
\end{equation}
Then, by simply noticing that $\Pr\{A\cap B\}\leq P(B)$, we can conclude that, for any honest input sequence $u_i$ (a DI code word), and for any pair of commitments $a^r,\tilde a^r$, we have
\begin{equation}
\!\!\Pr\left\{
    \mathsf{Acc}(V_{b},O_{a^r}|u_i)\!=\!1
    \, \cap\,
    \mathsf{Acc}(V_{b},O_{\tilde a^r}|u_i)\!=\!1\right\}
\leq \lambda_2.
\end{equation}

Then, a reliable DI code is binding in the honest but curious case. Indeed, we have proven the following.

\begin{theorem}
Given any reliable $(n,N,\lambda_1,\lambda_2)$-DI code, the proposed commitment protocol is $\epsilon_b$-binding, with $\epsilon_b=\lambda_2$, in the honest but curious security picture.\hfill\qed
\end{theorem}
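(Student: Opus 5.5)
The plan is to reduce bindingness to the second-kind error of the DI code. The reduction works because, in the honest but curious picture, the sender is forced to input a genuine code word $u_i$. Only the opening $w^k$ sent in the reveal phase can be altered.

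First, fix an honest commit. The sender inputs $u_i$ for some $i\in[N]$, chooses the hash $f$, and announces $z^r=f(w^k)\oplus a^r$. Bob's view $V_b$ is then $(y^n,f,z^r)$, with $y^n\sim\cG^n(\cdot|u_i)$.

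Next, I would observe that the verification rule $\mathsf{Acc}$ does two things. It accepts an opening $\tilde w^k$ exactly when $y^n\in\cD_j$, where $u_j$ is the code word of $\tilde w^k$. It then outputs the commitment $z^r\oplus f(\tilde w^k)$. So for a pair of distinct commitments $a^r\neq\tilde a^r$ to both be accepted, the two openings must be different binary strings. Since the DI code is an injective map from strings to code words, at least one of the two openings corresponds to a code word $u_j\neq u_i$.

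The main step is to bound the probability that this opening is accepted. For such an opening, acceptance happens with probability $\cG^n(\cD_j|u_i)\le\lambda_2$ by Definition~\ref{def:DI_code}. This holds for any deterministic choice of the opening. If the opening is chosen at random or adaptively, I would handle it by conditioning on the sender's choice. A subtle point is that adaptivity to $y^n$ is impossible, since the sender does not see Bob's output.

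Finally, I would use $\Pr\{A\cap B\}\le\Pr\{B\}$, taking $B$ to be acceptance of the opening whose code word is not $u_i$. This gives the joint acceptance bound $\epsilon_b=\lambda_2$, uniformly over $i$ and over pairs of messages.
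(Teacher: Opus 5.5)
Your proposal is correct and follows the paper's own argument. With the honest input $u_i$ fixed, an opening pointing to a different index $j$ is accepted with probability $\cG^n(\cD_j|u_i)\le\lambda_2$, and $\Pr\{A\cap B\}\le\Pr\{B\}$ then gives the bound. Your extra steps make explicit what the paper leaves implicit: distinct commitments force distinct openings, so at least one index differs from $i$, and randomised openings are handled by conditioning because the sender never sees $y^n$.
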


To avoid any confusion, let us expose a key remark: in the previous section we have explained that, due to the abundant overlaps of the identification regions, when a sequence is sent it will likely be accepted by many tests. Why does not this break bindingness automatically? 

Well, let Alice send a code word $u_i$. This word will likely be accepted by the test $\cD_i$ and rejected by any other test $\cD_j$, but the probability that it is rejected by \emph{all} other tests is tiny. So, while Alice knows that her code word will probably be accepted by several DI tests apart from $\cD_i$, she has no way of reliably knowing \emph{which tests} are those. In fact, if she tries to guess, since she does not know the exact output $y^n$, she will likely fail. This argument can be repeated in the case Alice sends an input sequence, not necessarily from the code, but the DI code is binding. That is the dishonest scenario which will be studied in Section~\ref{sec:dishonest}.

\subsection{Soundness of the honest but curious BC codes}\label{ssec:soundness_HBC}

As we have already argued in the introduction, reliable $(n,N,\lambda_1,\lambda_2)$-DI codes are sound for commitment. Indeed, under the proposed strategy, the probability of Bob accepting the true commitment is equal to the probability that the decoder $\cD_i$ correctly identifies the DI code word $u_i$ (the rest of the protocol is noiseless and deterministic). Then,
\begin{equation}
\Pr\left\{\mathsf{Acc}(V_b,O_m\mid m)\!=\!1\right\}
=\cG^n(D_i\mid u_i)
\geq 1-\lambda_1.
\end{equation}
In other words, we have shown the following:

\begin{theorem}\label{thm:sound_HBC}
Given any reliable $(n,N,\lambda_1,\lambda_2)$-DI code, the proposed commitment protocol is $\epsilon_s$-sound, with $\epsilon_s=\lambda_1$, in the honest but curious security picture.\hfill\qed
\end{theorem}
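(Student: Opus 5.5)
The plan is to trace Bob's acceptance event in the honest run back to the defining property of the DI code in Definition~\ref{def:DI_code}. In the honest but curious picture Alice follows every step of the protocol. She samples $w^k$, inputs the corresponding code word $u_i$ to $\cG^n$, sends $z^r=f(w^k)\oplus a^r$ together with a description of $f$, and in the reveal phase sends the true $w^k$ as the opening $O_m$.

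First I unpack the verification rule $\mathsf{Acc}(V_b,m,O_m)$ of the protocol into its two parts:
\begin{enumerate}
\item Bob recomputes $u_i$ from $w^k$ and checks whether $y^n\in\cD_i$.
\item If the test passes, he computes $z^r\oplus f(w^k)$ and compares it with the announced message.
\end{enumerate}
The second step is deterministic and noiseless. In an honest run $z^r\oplus f(w^k)=a^r$ always holds, since the same $f$ and $w^k$ are used on both sides and $\oplus$ is involutive. So the only random element of acceptance is the event $\{Y^n\in\cD_i\}$, where $Y^n=u_i+\sigma Z^n$.

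Next I condition on the random choices of $w^k$ and $f$. For every fixed $w^k$, with associated $u_i$, the conditional acceptance probability equals $\cG^n(\cD_i\mid u_i)$. By Definition~\ref{def:DI_code}, this is at least $1-\lambda_1$ for every $i\in[N]$.

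Because the bound is uniform in $i$, averaging over the uniformly random $w^k$ and over $f$ keeps it intact. Hence $\Pr\{\mathsf{Acc}(V_b,O_m\mid m)=1\}\geq1-\lambda_1$ for every commitment $m$, which is exactly $\epsilon_s=\lambda_1$ in Definition~\ref{def:commitment}c).

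There is no real obstacle here. The only point that needs care is the observation that soundness uses only the first DI error type. A curious receiver does not enter into it: Bob's verification is the prescribed test, and he is assumed rational, so he applies the test as specified. One detail deserves a remark: if $k$-bit strings outnumber code words, or several strings map to the same $u_i$, the argument is unchanged, since acceptance depends only on the transmitted $u_i$.
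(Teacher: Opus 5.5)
Your proposal is correct and takes the same route as the paper. The paper likewise notes that every step of the honest protocol other than the identification test is noiseless and deterministic, so Bob's acceptance probability equals $\cG^n(\cD_i\mid u_i)\geq 1-\lambda_1$ by Definition~\ref{def:DI_code}; your explicit averaging over $w^k$ and $f$ simply spells out what the paper leaves implicit.
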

\medskip

\subsection{From DI to honest but curious commitment codes}\label{ssec:DI_to_BC_HBC}
In the previous subsections we have observed that the commitment protocol proposed in Section~\ref{sec:strategy_concealment} is binding and sound if the used deterministic identification code is reliable. This, together with the concealment condition in Subsection~\ref{ssec:concelaing}, gives us the following result.

\begin{theorem}\label{thm:HBC}
From any reliable $(n,N,\lambda_1,\lambda_2)$-DI code for $\cG$ one can construct an $(n,M,\epsilon_c,\epsilon_b,\epsilon_s)$-BC code with 
\[
\epsilon_c=2^{-\frac12 \gamma n \log n},\quad\epsilon_b=\lambda_2,\quad\text{and}\quad\epsilon_s=\lambda_1.
\]
Furthermore, if $\log N$ scales linearithmically, And $\dot R_{\text{DI}}>0$ is an achievable DI rate, then $\dot R_{\text{BC}}=\dot R_{\text{DI}}$ is an achievable commitment rate.
\end{theorem}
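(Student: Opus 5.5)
The proof assembles the three properties already established, then carries out an asymptotic rate argument. First, given a reliable $(n,N,\lambda_1,\lambda_2)$-DI code with $N=2^k$, I will instantiate the commitment protocol of Section~\ref{sec:strategy_concealment} with $M=2^r$. Then Theorem~\ref{thm:concealment} gives $\epsilon_c=2^{-\frac12\gamma n\log n}$ whenever $\dot R_\text{BC}(n)\le\dot R_\text{DI}(n)-\gamma-\frac{C_T(\cG)+\delta}{\log n}$. The binding theorem of Subsection~\ref{ssec:binding_HBC} gives $\epsilon_b=\lambda_2$, and Theorem~\ref{thm:sound_HBC} gives $\epsilon_s=\lambda_1$. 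These three bounds together are exactly the first part of the statement, with $r$ chosen so that the rate condition holds. Concretely, I will take $r=\lfloor k-\gamma n\log n-n(C_T(\cG)+\delta)\rfloor$, assuming this is nonnegative.

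For the rate claim, I fix an achievable DI rate $\dot R_\text{DI}>0$. By definition there is a sequence of DI codes with $\lambda_1,\lambda_2\to0$ and $\frac{\log N_n}{n\log n}\to\dot R_\text{DI}$. I then fix $\gamma>0$ small and $\delta>0$ arbitrary, and choose $r_n$ as above. The resulting commitment rate satisfies
\[
\dot R_\text{BC}(n)=\frac{r_n}{n\log n}\ge \dot R_\text{DI}(n)-\gamma-\frac{C_T(\cG)+\delta}{\log n}-\frac{1}{n\log n},
\]
which tends to $\dot R_\text{DI}-\gamma$ as $n\to\infty$. This is because $C_T(\cG)=\frac12\log(1+P/\sigma^2)$ is finite, so the correction $\frac{C_T(\cG)+\delta}{\log n}$ vanishes. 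Along this sequence, all three errors vanish: $\epsilon_c=2^{-\frac12\gamma n\log n}\to0$, $\epsilon_b=\lambda_2\to0$ and $\epsilon_s=\lambda_1\to0$. Hence every $\dot R_\text{DI}-\gamma$ is achievable for commitment.

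The main obstacle is the limit $\gamma\to0$: concealment needs a fixed $\gamma>0$ for $\epsilon_c$ to decay, so no single sequence attains $\dot R_\text{DI}$ directly. I will handle this with the standard diagonal argument. I choose $\gamma_n\downarrow0$ slowly enough that $\gamma_n n\log n\to\infty$, for instance $\gamma_n=1/\log\log n$. Then $\epsilon_c\to0$ still holds, and the rates converge to $\dot R_\text{DI}$ exactly. Alternatively, since capacity is defined as a supremum, it suffices to note that $\dot R_\text{DI}$ is a limit of achievable commitment rates. Either route gives that $\dot R_\text{BC}=\dot R_\text{DI}$ is achievable. The only remaining check is that Theorem~\ref{thm:privacy_ampl_continuous} applies uniformly, for all large $n$, with the $\delta$ kept fixed.
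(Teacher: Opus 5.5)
Your proposal is correct and follows essentially the same route as the paper: the first part assembles Theorem~\ref{thm:concealment} with the honest-but-curious binding ($\epsilon_b=\lambda_2$) and soundness ($\epsilon_s=\lambda_1$) results, and the rate claim rests on $\frac{C_T(\cG)+\delta}{\log n}\to 0$, so that any rate $\dot R_\text{DI}-\nu$ satisfies the concealment condition for small $\gamma$ and large $n$. Your explicit choice of $r_n$ and the diagonal choice $\gamma_n=1/\log\log n$ (which keeps $\gamma_n n\log n\to\infty$) make rigorous the paper's final step of letting $\nu\to 0$.
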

\begin{proof}
    It is only left to show the last claim. Directly from the concealment condition in Theorem~\ref{thm:concealment}, it is clear that, for any $\nu>0$, there exist small enough $\gamma>0$ and large enough $n$ such that 
    \begin{equation}
    \dot R_\text{BC}(n)= \dot R_\text{DI}-\nu\leq \dot R_\text{DI}(n)-\gamma -\frac{C_T(\cG)+\delta}{\log n},
    \end{equation}
    since the last term vanishes as $n$ increases. Hence, the concealment condition can be satisfied for any $\nu>0$. Finally, since $\nu$ can also be chosen arbitrarily small (without breaking the previous argument), we have that $\dot R_\text{DI}$ must be an achievable rate for commitment. 
\end{proof}
\begin{corollary}\label{cor:HBC}
The honest but curious commitment capacity over AWGN channels is lower bounded by $\dot C_\text{hBC}(\cG)\geq\frac12$.
\end{corollary}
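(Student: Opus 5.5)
The corollary follows by combining Theorem~\ref{thm:HBC} with the Gaussian DI capacity of Theorem~\ref{thm:DI_continuous}, which states $\dot C_\text{DI}(\cG)=\frac12$. The plan is to fix any $\nu>0$ and use the achievability part of that result. It provides, for all large $n$, reliable $(n,N_n,\lambda_1^{(n)},\lambda_2^{(n)})$-DI codes over $\cG$ satisfying the power constraint \eqref{eq:power_constraint}, with $\lambda_1^{(n)},\lambda_2^{(n)}\to0$ and $\log N_n\geq(\frac12-\nu)\,n\log n$. These are, for instance, the projection-based codes of \cite{CDBW:Gauss}. Since $\frac12-\nu>0$, the code size indeed scales linearithmically, so the hypothesis of the last claim of Theorem~\ref{thm:HBC} is met.

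Next I would apply Theorem~\ref{thm:HBC} to this sequence of codes. One small technical point: the protocol of Section~\ref{sec:strategy_concealment} indexes DI codewords by binary strings $w^k$, so it needs $N=2^k$. I would discard codewords down to $2^{\lfloor\log N_n\rfloor}$. This keeps $\lambda_1$ and $\lambda_2$ unchanged and changes the linearithmic rate by only $O(1/(n\log n))$.

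Then I choose $r$ with $r/(n\log n)\to\frac12-2\nu$. For fixed $\gamma<\nu/2$ and $\delta>0$, the condition of Theorem~\ref{thm:concealment} holds for large $n$, because $(C_T(\cG)+\delta)/\log n\to0$ and $C_T(\cG)=\frac12\log(1+P/\sigma^2)$ is finite. This gives concealment error $\epsilon_c=2^{-\frac12\gamma n\log n}\to0$. The honest-but-curious binding error is $\epsilon_b=\lambda_2^{(n)}\to0$ and the soundness error is $\epsilon_s=\lambda_1^{(n)}\to0$. Hence $\frac12-2\nu$ is an achievable honest-but-curious commitment rate.

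Finally, letting $\nu\downarrow0$ and taking the supremum over achievable rates gives $\dot C_\text{hBC}(\cG)\geq\frac12$. The only genuinely delicate point is the invocation of the DI achievability itself. One must check that Theorem~\ref{thm:DI_continuous} supplies codes whose errors vanish simultaneously, not merely for fixed positive $\lambda_1,\lambda_2$, and under the same block power constraint $P$ and noise level $\sigma$ used here. If the cited result only gives fixed-error achievability, I would instead appeal to the explicit construction of \cite{CDBW:Gauss}. There the error probabilities decay through the projection tail bound of Proposition~\ref{prop:small_projection}, at rates arbitrarily close to $\frac12$.
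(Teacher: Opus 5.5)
Your proposal is correct and follows essentially the same route as the paper. The paper takes the capacity-achieving projection-based DI codes of \cite{CDBW:Gauss}, notes that their errors $\lambda_1=2L\Phi(-\log n)$ and $\lambda_2=2\Phi(-\log n)$ vanish by Proposition~\ref{prop:small_projection}, feeds them into Theorem~\ref{thm:HBC}, and takes the supremum over achievable rates. Your extra bookkeeping (the $\nu$-slack, pruning to $N=2^k$, and the explicit check of the concealment condition) only spells out details the paper's short proof leaves implicit.
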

\begin{proof}
The $L$-layer capacity-achieving DI codes described in \cite{CDBW:Gauss}, with $L>0$ a constant integer, have 
\begin{equation}
\lambda_1=2L\Phi(-\log n)\quad\text{and}\quad\lambda_2=2\Phi(-\log n),
\end{equation}
which vanish as $n\to\infty$ since, by Proposition~\ref{prop:small_projection}, $\Phi(-\log n)=\exp[-\Theta((\log n)^2)]$. Hence, those codes are reliable and can achieve a linearithmic rate of $\dot R_\text{DI}=\frac12$. Then, by Theorem~\ref{thm:HBC}, $\dot R_\text{BC}=\frac12$ is also achievable. The capacity is the supremum of all achievable rates, and it is hence lower bounded by $\dot C_\text{hBC}(\cG)\geq\dot R_\text{BC}=\frac12$.
\end{proof}


\section{Main results: dishonest security setting}\label{sec:dishonest}
In this section we will observe that a modification of the single-layer DI codes proposed in \cite{CDBW:Gauss, CDBW:EW_simplified} can produce, through the strategy in Section~\ref{sec:strategy_concealment} good commitment codes. 
These DI codes that we want to reuse are based on a specific geometric structure of the points together with suitably defined projective decoders.  

\textbf{Encoding:} The DI code words are $N$ elements $\{u_i\}_{i=1}^N$ at a minimum angular separation $\theta$ in the surface of the ball $\cB_{\vec o}(n,\sqrt{nP})$ that defines the input space.

\textbf{Decoding:} To identify the code word $u_i$, with $i\in[N]$, we use the following projection test:
\begin{equation}\label{eq:decoder_layer}
    \cD_i=\{\vec y\in\RR^n:\|\Pi_{\vec u_i}\vec y-\vec u_i\|_2\leq \sigma\log n\}.
\end{equation}
This decoding region is a slab of width $w=2\sigma\log n$ centred at the head of the vector $\vec u_i$ and perpendicular to $\vec u_i$. See Figure~\ref{fig:1layer_no_binding_slabs} for visual intuition.

It was shown in \cite{CDBW:EW_simplified,CDBW:Gauss} that the construction above is reliable for DI if a suitable minimum angle $\theta$ is choses. In particular, choosing $\theta$ such that 
\begin{equation}\label{eq:angle-dense}
    \cos\theta=1-\frac{2\sigma\log n}{\sqrt{nP}},
\end{equation}
which implies that for any two different code words $u_i\neq u_j$
\begin{equation}\label{eq:d-angle-dense}
\|\vec u_i-\Pi_{\vec u_i}\vec u_j\|_2\geq2\sigma\log n,
\end{equation}
one obtains, with the encoding and decoding strategy above, a reliable $(n,N,\lambda,\lambda)$ DI code, with $\lambda=2\Phi(-\log n)$, achieving a linearithmic rate of $\dot R_{\text{DI}}=\frac14$. Meaning that, if a word is sent and tested, it will be accepted with high probability (this will give us the soundness of the BC protocol in Subsection~\ref{ssec:sound_dishonest}). Whereas if the sent word is different from the tested one, it will almost surely not be accepted (for bindingness in the dishonest case, we need a stronger notion than this one, which will force a modification of the DI code, discussed and studied in Subsection~\ref{ssec:binding_dishonest}). In Subsection~\ref{ssec:DI_to_BC_dishonest} we analyse the rates of the modified DI code and the subsequent commitment code.


\subsection{Projection-based DI codes are sound for commitment}\label{ssec:sound_dishonest}
Soundness studies the probability of rejection of the true commitment after honest behaviour by the two parties. Then, we just need to study the probability that a DI code word of a projection based code, say $u_i$, is rejected by its corresponding decoder $\cD_i$ (this is exactly the first type of error of the DI protocol, as we already argued in Subsection~\ref{ssec:soundness_HBC}). All the other parts of the protocol are noiseless and deterministic and, therefore, do not affect the correctness of the protocol after honest behaviour. We have,
\begin{equation}
\Pr\left\{\mathsf{Acc}(V_b,O_{a^r}\mid a^r)=1\right\}
=\Pr\left(\cD_i\mid u_i\right)\geq1-\lambda_1,
\end{equation}
The last inequality following from Definition~\ref{def:DI_code}. Notice also that the first type of error, a missed identification, in a projection-based DI code such as the one described in the beginning of this section does not depend on the chosen minimum angular separation $\theta$ between code words. Indeed, for any $u_i$, with $i\in[N]$, and its corresponding decoder [defined in Equation~\eqref{eq:decoder_layer}], the following is satisfied:
\begin{align}
\Pr\left(\cD_i\big|\,\vec{u}_i\right)&=\Pr\left\{\|\Pi_{\vec u_i}\vec{Y}-\vec{u}_i\|_2\leq\sigma\log n\,\Big|\,\vec{u}_{i}\right\}\nonumber \\
&=\Pr\left\{\|\Pi_{\vec u_i}(\vec{u}_{i}+\sigma\vec{Z})-\vec{u}_{i}\|_2\leq\sigma\log n\right\}\nonumber \\
&=\Pr\left\{\|\Pi_{\vec u_i}\vec{Z}\|_2\leq\log n\right\}\nonumber\\
&=1-2\Phi\left(-\log n\right),\label{eq:layer_error1}
\end{align}
where the second line follows from the effect of an AWGN channel [see Equation~\eqref{eq:channel_effect}]; the third from the fact that the projection $\Pi_{\vec u_i}\vec u_i=\vec u_i$; and in the last inequality we have used Proposition~\ref{prop:small_projection}. In other words, the missed identification error of the code word $u_i$ is, for all $i\in[N]$ and the superscript $c$ denoting the complementary set, 
\begin{equation}
  \!\!\!P_{e,1}(\vec{u}_i)\!:=\Pr\left(\cD_i^{c}\big|\vec{u}_i\right)\!=1-\Pr\left(\cD_i\big|\vec{u}_i\right)\!=2\Phi\left(-\log n\right)\!.
\end{equation}
 
\begin{theorem}\label{thm:epsilon_soundness}
The proposed commitment protocol using the single-layer DI code described in this section is $\epsilon_s$-sound with $\epsilon_s=2\Phi(-\log n)$ in the (general) dishonest setting. \qed
\end{theorem}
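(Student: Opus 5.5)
The plan is to reduce soundness to the missed-identification error of the single-layer projection code. First, under honest behaviour the sender picks $w^k$ and sends the associated code word $u_i$, which lies on the sphere of radius $\sqrt{nP}$ and so satisfies the power constraint \eqref{eq:power_constraint}. She then sends $z^r=f(w^k)\oplus a^r$ together with the description of $f$ over the noiseless link, and in the reveal phase she opens with $O_{a^r}=w^k$. Given this opening, Bob recomputes $u_i$ and runs the test $\cD_i$ of \eqref{eq:decoder_layer} on $y^n$. If the test accepts, he recovers $a^r=z^r\oplus f(w^k)$ exactly, because XOR is an involution and everything on the noiseless link is deterministic. Hence the event $\{\mathsf{Acc}(V_b,O_{a^r}\mid a^r)=1\}$ coincides with $\{\vec Y\in\cD_i\}$ for $\vec Y=\vec u_i+\sigma\vec Z$.

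Second, I will evaluate this probability exactly using the computation \eqref{eq:layer_error1}. Since $\Pi_{\vec u_i}$ is linear and fixes $\vec u_i$, we have $\Pi_{\vec u_i}\vec Y-\vec u_i=\sigma\Pi_{\vec u_i}\vec Z$. The acceptance condition therefore becomes $\|\Pi_{\vec u_i}\vec Z\|_2\le\log n$. The projection $\Pi_{\vec u_i}\vec Z$ equals $G\,\vec u_i/\|\vec u_i\|$ with $G=(\vec u_i,\vec Z)/\|\vec u_i\|\sim\cN(0,1)$, by rotational invariance of the i.i.d.\ Gaussian vector. Proposition~\ref{prop:small_projection} then gives rejection probability exactly $2\Phi(-\log n)$. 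This bound is uniform in $i$, so it holds for every commitment $a^r$ and every random choice of $w^k$ and $f$. Averaging over the sender's private randomness keeps the bound, which yields $\epsilon_s=2\Phi(-\log n)$.

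Third, I will note why this holds in the dishonest setting too. Soundness, as defined in Definition~\ref{def:commitment}c), only concerns honest behaviour by the sender. Bob's verification rule is fixed by the protocol, and by the rationality remark he applies it faithfully. Moreover, the argument never uses the angular separation $\theta$ or any code modification. So the bound also holds for the modified binding code of Subsection~\ref{ssec:binding_dishonest}, provided its decoders keep the slab form \eqref{eq:decoder_layer} with half-width $\sigma\log n$ around each code word.

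There is no real obstacle here. The only point needing care is to state explicitly that acceptance is exactly the identification event, with no extra error from the hashing step. It is also worth recording, via Proposition~\ref{prop:small_projection}, that $\epsilon_s\le\sqrt{2/\pi}\,(\log n)^{-1}e^{-(\log n)^2/2}\to0$.
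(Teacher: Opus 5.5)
Your proposal is correct and follows essentially the same route as the paper. You reduce acceptance to the event $\vec Y\in\cD_i$, since the hashing/XOR part is noiseless and deterministic, and then use $\Pi_{\vec u_i}\vec u_i=\vec u_i$ with Proposition~\ref{prop:small_projection} to get rejection probability exactly $2\Phi(-\log n)$ independent of the angle $\theta$, which is also why the bound carries over unchanged to the modified binding code.
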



\subsection{About bindingness in the dishonest scenario}\label{ssec:binding_dishonest}
In the fully dishonest case, the sender can choose any input sequence (not necessarily a DI code word) to transmit through the noisy channel. This is a picture that deterministic identification does not contemplate. Indeed, one needs to ensure that the used DI code is binding. 
\begin{definition}
    A \emph{binding DI code} has no input sequence (not necessarily a code word) such that the output is reliably accepted by two different tests.
\end{definition}

Given any sequence $x^n$ input to the noisy channel and for any pair of committed messages $m,m'\in[M]$, one has
\begin{align}
&\!\!\Pr\left\{\mathsf{Acc}(V_{b},O_m|x^n)=1
    \, \cap\,
    \mathsf{Acc}(V_{b},O_{m'}|x^n)=1\right\}\label{eq:binding_min}\\
&\leq\min\!\left[\Pr\left\{\mathsf{Acc}(V_{b},O_m|x^n)\!=\!1\!\right\}\!,\Pr\left\{\mathsf{Acc}(V_{b},O_{m'}|x^n)\!=\!1\!\right\}\right]\!.\nonumber
\end{align}
If the DI code is binding, no input sequence $x^n$ can be reliably accepted by more than one test, and then the minimum probability above must vanish, proving bindingness. The problem is thus reduced to finding and studying such binding DI codes.

In this section, we will observe that a modification of the single-layer code for DI over AWGN channels described in the beginning of the section and extracted from in \cite{CDBW:Gauss,CDBW:EW_simplified} is binding. Furthermore, we will observe that the code is still reliable and its size is not changed at first order.

Let us start by observing that, if one chooses the minimum angle between code words in Equation~\eqref{eq:angle-dense}, the one used in \cite{CDBW:Gauss,CDBW:EW_simplified}, input sequences $x^n$ (not from the code) can be found such that they are likely accepted by two tests $\cD_i$ and $\cD_j$ corresponding to different code words $u_i$ and $u_j$, respectively. 

To understand this, remember that the decoders in Equation~\eqref{eq:decoder_layer} are slabs of width $w=2\log n$ centred and perpendicular to $\vec u_i$, see Figure~\ref{fig:1layer_no_binding_slabs}. Since, by Proposition~\ref{prop:small_projection}, the projection of the output onto the direction tested by the decoder is with high probability close to the sent sequence, the test accepts with high probability a sequence inside the decoding slab.

In DI (and in the honest but curious case) we have enough with code words $\vec u_j\neq \vec u_i$ being away from the decoding slab $\cD_i$ to reduce the wrong identification error (bindingness error, in the honest but curious case). Nevertheless, this is not enough to show bindingness in the dishonest picture. Indeed, one could send a sequence inside the overlapping region between different acceptance slabs [see Figure~\ref{fig:1layer_no_binding_slabs}], which would, with high probability, be accepted by the two corresponding tests.

\begin{figure}
    \centering
    \includegraphics[width=\linewidth]{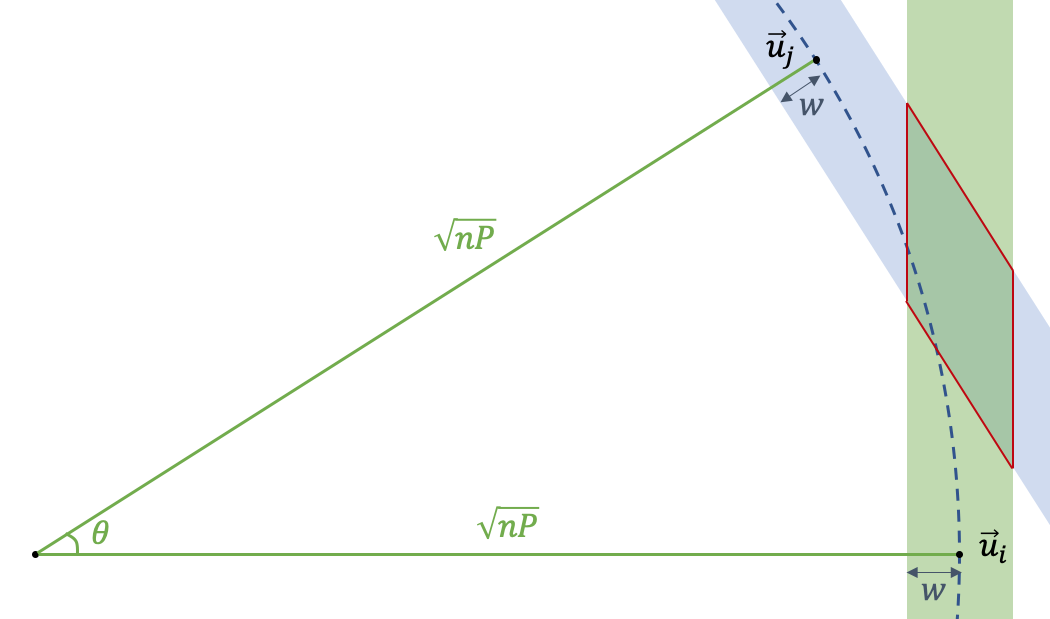}
    \caption{In green the acceptance region of width $w$ of a word $\vec u_i$. In blue, the acceptance region corresponding to $\vec u_j$. If an output sequence falls in the overlap between the two acceptance slabs, both tests are positive.}
    \label{fig:1layer_no_binding_slabs}
\end{figure}

\medskip
Let $u_i$ be the code word assigned to the key sequence $w^k$ and $u_j$ to a different $\tilde w^k$. Then, if there exists a sequence $x^n$ in the decoder overlapping region such that, $\cG^n(\cD_i|x^n)\approx\cG^n(\cD_j|x^n)\geq1-\delta$, for some small $\delta>0$ (the DI code is not binding), the sender can cheat using the following strategy.


\begin{tcolorbox}[title=Cheating strategy using a non-binding DI code]
\noindent\textbf{Commit Phase:}
\begin{enumerate}
    \item The sender inputs $x^n$ to the noisy channel. Then, the receiver obtains a noisy output sequence $y^n$ which will likely be accepted by $\cD_i$ and $\cD_j$.
    \item The sender picks $f$, and sends it together with $z^r=f(w^k)\oplus a^r$ through the noiseless channel, thus committing to $a^r$.
\end{enumerate}

\noindent\textbf{Reveal Phase:}
\begin{enumerate}
    \item[1a)] If the sender honestly shares $w^k$ the receiver computes $u_i$ and performs the test $\cD_i$ to  $y^n$.
    \item[1b)] If the test is positive (which likely happens), the receiver computes $a^r=z^r\oplus f(w^k)$ and accepts it as a valid commitment.
    \item[2a)] If the sender dishonestly shares $\tilde w^k$ the receiver computes $u_j$ and performs the test $\cD_j$ to $y^n$.
    \item[2b)] If the test is positive (which likely happens), the receiver computes $\tilde a^r=z^r\oplus f(\tilde w^k)$ and accepts it as a valid commitment.
    \item[3)] The sender is not bound to its commitment, so it can successfully cheat!   
\end{enumerate}
\end{tcolorbox}
\medskip



There exists, however, a way to overcome the issue. It is based on the fact that all the words of the DI code proposed above live in the surface of the $n$ dimensional sphere $B_{\vec o}(n,\sqrt{nP})$ that defines the available input space (the dashed line in Figures~\ref{fig:1layer_no_binding_slabs} and \ref{fig:1layer_binding_slabs}.

Then, slightly increasing the angles of the initial construction, we can force the overlapping region between code words to separate from the centre (see Figure \ref{fig:1layer_binding_slabs}). In other words, we can increase the power needed to send a non-binding (cheating) code word. The aim of this strategy is to show that there exists a code such that any possible cheating sequence is outside the available input region (it cannot be sent). The needed angle increase does not affect the soundness error (which we already argued was independent of the angle), and we will observe in Section~\ref{ssec:DI_to_BC_dishonest} that it does not change the code size at leading order. 
\medskip
\begin{figure}
    \centering
    \includegraphics[width=0.98\linewidth]{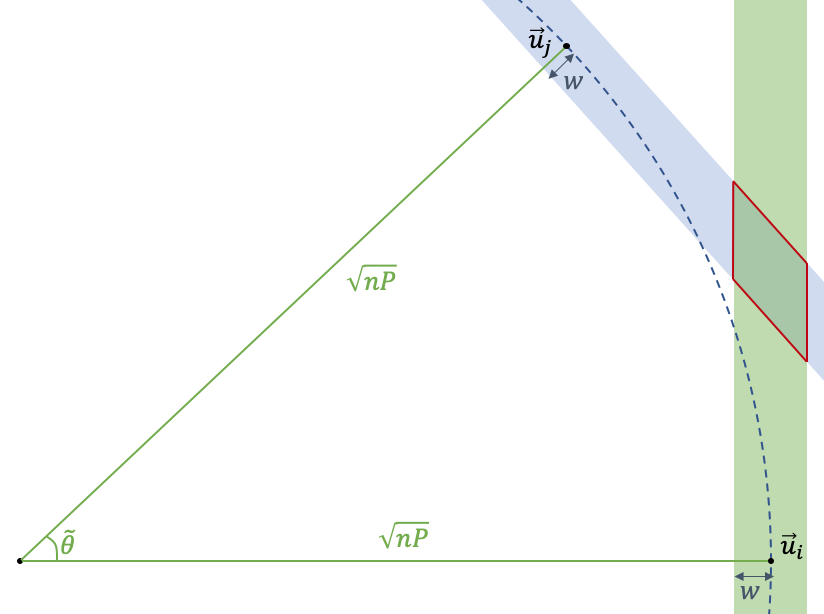}
    \caption{A larger angle pushes the slab overlap to be at least at an extra distance $d$ from the centre of that layer.}
    \label{fig:1layer_binding_slabs}
\end{figure}

Let us calculate the angle $\tilde\theta$ that forces the overlap between any two decoding slabs being separated at least a distance $r+w$, with $r=\sqrt{nP}$ and $w=\sigma\log n$, from the origin. Using basic geometry and Figure \ref{fig:1layer_binding_slabs} for visual intuition, it is clear that the following condition for the minimum angle $\tilde \theta$ between code words must be met: 
\begin{equation}\label{eq:extended_angle}
    \cos \frac{\tilde\theta}{2}=\frac{r-w}{r+w}=1-\frac{2\sigma\log n}{\sqrt{nP}}+O\left(\frac{(\log n)^2}{n}\right).
\end{equation}

In this setting, we show next that there is no input sequence which can be simultaneously accepted by two different identification tests with non-vanishing probability. We start by studying a particular strategy, and then show that there exist no better plan for a dishonest sender. 

Let two code words $\vec u_i$ and $\vec u_j$ be separated by the minimum angular distance $\tilde \theta$. Alice sends the available sequence $x^n$ closest to the overlap between the acceptance slab regions of the two code words at $\ell$. This is the maximum power sequence $\|x^n\|_2=\sqrt{nP}$ at an angular distance $\tilde\theta_{\ell}/2$ from both words. This sequence satisfies the following:
\begin{align}
\|\Pi_{\vec u_i}\, x^n - \vec u_i\|_2
&=\|\Pi_{\vec u_j}\, x^n - \vec u_j\|_2\nonumber\\
&=r- \|x^n\|_2 \cos \frac{\tilde\theta}{2}\nonumber\\
&= \sqrt{nP}\left[1-1+\frac{2\sigma\log n}{\sqrt{nP}}-O\left(\frac{(\log n)^2}{n}\right)\right]\nonumber\\
&\leq2\sigma\log n,\label{eq:opt_strategy}
\end{align}
where the second line follows from basic trigonometry; in the third we substituted the values $\|x^n\|_2=r=\sqrt{nP}$ and $\cos\tilde\theta_\ell/2$ in Equation~\eqref{eq:extended_angle}; and in the fourth line we have simplified the expression giving a lower bound by removing the last negative lower order term.

Now we can calculate the probability that this sequence is accepted by the decoder proposed in Equation~\eqref{eq:decoder_layer}:
\begin{align}
&\Pr\left(\cD_{i}\big|\,x^n\right)=\Pr\left\{\|\Pi_{\vec u_i}\vec{Y}-\vec{u}_{i}\|_2\leq \sigma\log n\, \Big|\,x^n\right\}\nonumber\\
&=\Pr\left\{\|\Pi_{\vec u_i}(x^n+\sigma\vec{Z})-\vec{u}_{i}\|_2\leq \sigma \log n\right\}\nonumber\\
&\leq\Pr\left\{\|\Pi_{\vec u_i}x^n-\vec u_{i}\|_2-\sigma \|\Pi_{\vec u_i}\vec{Z}\|_2 \leq\sigma \log n\!\right\}\nonumber\\
&\leq\Pr\left\{\sigma\|\Pi_{\vec u_i}\vec{Z}\|_2\geq\sigma\log n\right\}\nonumber\\
&=2\Phi\left(-\log n\right), \label{eq:second_error}
\end{align}
where in the second line we have introduced the effect of the AWGN channel [see Equation~\eqref{eq:channel_effect}], in the third we have used the reverse triangle inequality, the fourth follows from the calculation in Equation~\eqref{eq:opt_strategy}, and the last line from Proposition~\ref{prop:small_projection}.
Thus, the probability that Bob will accept the word $u_i$ if $x^n$ is sent vanishes with $n$. An absolutely analogous calculation similarly shows the same for the word $u_j$: 
\begin{equation}
\Pr\left(\cD_{j}\big|\,x^n\right) \leq 2\Phi\left(-\log n\right) = \exp\left[-O\left(\log n\right)^2\right].\label{eq:second_error_bis}
\end{equation}

Therefore, if Alice sends the proposed sequence $x^n$, with high probability, both tests $\cD_i$ and $\cD_j$ will be negative. In other words, If Alice commits to a word $a^r$ by sending the required information with the key $w^k$ related to the DI code word $\vec u_i$, and aims to reveal a different word $\tilde a^r$ by disclosing a key $\tilde w^k$ related to the code word $\vec u_j$, Bob will not accept $\tilde a^r$ as a valid commitment. Because when he receives the key $\tilde w^k$ and hence tests the output of the noisy channel with $\cD_j$ the probability of a positive outcome vanishes [see Equation~\eqref{eq:second_error_bis}].

Actually, even if Alice revealed the correct key $w^k$ during the reveal phase, the commitment would also be rejected with high probability since she sent the cheating sequence $x^n$ instead of a true code word, and the $\Pr\left(\cD_{i}\big|\,x^n\right)$ also vanishes [see Equation~\eqref{eq:second_error}].

Now, notice that if we try to send any other input sequence $\tilde x^n$ through the noisy channel at least one of $\Pr\left(\cD_{u_i}|\,\tilde x^n\right)$ or $\Pr\left(\cD_{u_j}|\,\tilde x^n\right)$ must get smaller. Indeed, let Alice send a maximum power sequence with less projected distance $\|\Pi_{\vec u_i}\,\tilde x^n-\vec u_i\|_2$ to one of the two words, say $u_i$. While she augments the probability that that word can be accepted by the test $\cD_i$ [see the fourth line in Equation~\eqref{eq:second_error}]; she necessarily increments the projected distance $\|\Pi_{\vec u_j}\, x^n - \vec u_j\|_2$ to the other word $u_j$ (therefore reducing the probability that the test $\cD_j$ will accept it). 

Also the projected distance to the code words increases if the input sequence is not of maximum power, hence reducing the probability of acceptance. 

The arguments above allows us to conclude that, for any input sequence $\tilde x^n$ to the noisy channel and any pair of commitments $a^r$ and $\tilde a^r$ (uniquely related through the noiseless communication elements to the DI code words $u_i$ and $u_j$ respectively) satisfies,
\begin{align}
&\min\left[\Pr\left\{\mathsf{Acc}(V_{b},O_{a^r}|\tilde x^n)\!=\!1\!\right\}\!,\Pr\left\{\mathsf{Acc}(V_{b},O_{\tilde a^r}|\tilde x^n)\!=\!1\!\right\}\right]\nonumber\\
&\qquad=\min \left[\Pr\left(\cD_{u_i}|\,\tilde x^n\right), \,\Pr\left(\cD_{u_j}|\,\tilde x^n\right) \right]\nonumber\\
&\qquad\leq2\Phi\left(-\log n\right).
\end{align}
Introducing this into Eq.~\eqref{eq:binding_min} proves the following result:

\begin{theorem}\label{thm:epsilon_binding}
Using the binding DI code above, the proposed BC protocol is $\epsilon_b$-binding with $\epsilon_b=2\Phi\left(-\log n\right)$.\qed
\end{theorem}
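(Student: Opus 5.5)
The plan is to bound, for an arbitrary admissible input $\tilde x^n$ (with $\|\tilde x^n\|_2\leq\sqrt{nP}$) and any two distinct code words $\vec u_i\neq\vec u_j$ of the modified code with minimum angle $\tilde\theta$ from Eq.~\eqref{eq:extended_angle}, the quantity $\min[\Pr(\cD_i|\tilde x^n),\Pr(\cD_j|\tilde x^n)]$, and then insert it into Eq.~\eqref{eq:binding_min}. The reduction from acceptance events to identification tests is immediate: the opening $O_m$ determines a key $w^k$ and hence a code word, and acceptance requires $Y^n\in\cD_i$. If Alice opens $m$ with some other key, she is only switching to a different test, so it suffices to work with an arbitrary pair of decoders. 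The hashing and noiseless parts play no role beyond this.

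The core is a deterministic geometric lemma. For every $\vec x$ in the ball $\cB_{\vec o}(\sqrt{nP},n)$ and every pair $i\neq j$, at least one of the projected distances satisfies $\|\Pi_{\vec u_i}\vec x-\vec u_i\|_2\geq 2\sigma\log n$ or $\|\Pi_{\vec u_j}\vec x-\vec u_j\|_2\geq2\sigma\log n$. To prove it, write $t_i=(\vec x,\vec u_i)/\|\vec u_i\|$, so the projected distance is $|r-t_i|$ with $r=\sqrt{nP}$. Assume both distances are below $2\sigma\log n$; then $t_i,t_j> r-2\sigma\log n$. Reduce to the plane spanned by $\vec u_i,\vec u_j$: the projection $\vec x'$ of $\vec x$ onto this plane has $\|\vec x'\|\leq r$ and the same $t_i,t_j$. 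Since the angle between $\vec u_i$ and $\vec u_j$ is at least $\tilde\theta$, we have $\max\big(\angle(\vec x',\vec u_i),\angle(\vec x',\vec u_j)\big)\geq\tilde\theta/2$. This gives $\min(t_i,t_j)\leq r\cos(\tilde\theta/2)$. By Eq.~\eqref{eq:extended_angle} (the defining choice $\cos(\tilde\theta/2)=(r-w)/(r+w)$), this is at most $r-2\sigma\log n$ up to the sign of the lower-order term, which is a contradiction.

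I expect this step to be the main obstacle. The constants have to be checked carefully, since the exact identity $r\frac{r-w}{r+w}=r-\frac{2rw}{r+w}$ gives slightly less than $2w$. If necessary I would enlarge $\tilde\theta$ by a lower-order amount, for instance by requiring $\cos(\tilde\theta/2)\leq 1-\frac{(2+\eta)\sigma\log n}{\sqrt{nP}}$, which does not affect rates. The obtuse-angle case, where $t_i$ can be negative, is trivially fine.

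Given the lemma, the probabilistic step repeats Eq.~\eqref{eq:second_error}. If $\|\Pi_{\vec u_i}\tilde x^n-\vec u_i\|_2\geq2\sigma\log n$, the reverse triangle inequality shows that acceptance by $\cD_i$ requires $\sigma\|\Pi_{\vec u_i}\vec Z\|_2\geq\sigma\log n$. By Proposition~\ref{prop:small_projection} this has probability $2\Phi(-\log n)$. Hence the minimum is at most $2\Phi(-\log n)$ uniformly in $\tilde x^n$ and in the pair. Finally, Eq.~\eqref{eq:binding_min} gives $\epsilon_b=2\Phi(-\log n)$, and the bound extends to randomized inputs by averaging: the probability of the intersection event is at most the minimum of the two acceptance probabilities for each fixed $\tilde x^n$, and averaging preserves the bound.
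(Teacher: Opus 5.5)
Your proof is correct and shares the paper's skeleton: bound the intersection by the minimum of the two acceptance probabilities, show that one projected distance is at least $2\sigma\log n$, then use the reverse triangle inequality and Proposition~\ref{prop:small_projection}. Where you differ is the geometric step.

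The paper evaluates only one candidate cheat, the maximum-power sequence on the bisector of a pair at exactly the minimum angle. It then argues informally that moving the input toward one code word, or lowering its power, can only hurt one of the two tests. Your lemma instead rests on the angular triangle inequality $\angle(\vec u_i,\vec u_j)\le\angle(\vec x,\vec u_i)+\angle(\vec x,\vec u_j)$, which holds directly in $\RR^n$, so the reduction to the plane is optional. It gives $\min(t_i,t_j)\le\sqrt{nP}\cos(\tilde\theta/2)$ for every $\vec x$ in the ball and every pair at angle at least $\tilde\theta$, and so turns the paper's heuristic into a proof. The paper's route buys a picture of the optimal cheat; yours buys uniformity.

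You are also right to distrust the constants. With $r=\sqrt{nP}$, $w=\sigma\log n$ and $\cos(\tilde\theta/2)=\frac{r-w}{r+w}$, the bisector input has projected distance $\frac{2rw}{r+w}<2w$. So the final ``$\le 2\sigma\log n$'' in Eq.~\eqref{eq:opt_strategy} points the wrong way for its use in Eq.~\eqref{eq:second_error}.

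Your repair costs nothing. Already $\cos(\tilde\theta/2)=1-\frac{2w}{r}$ (your $\eta=0$) gives $\min(t_i,t_j)\le r-2w$. The packing computation in Subsection~\ref{ssec:DI_to_BC_dishonest} only uses $\cos(\tilde\theta/2)\ge1-\frac{2w}{r}$, so the rate $\frac14$ is untouched.

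Alternatively, you can keep the angle of Eq.~\eqref{eq:extended_angle} and use a one-sided tail. Acceptance by the test of the farther code word forces $(\vec Z,\vec u_i)/\|\vec u_i\|\ge\log n\,\frac{r-w}{r+w}$, which has probability $\Phi\bigl(-\log n\,\frac{r-w}{r+w}\bigr)$. Since $(\log n)^2\bigl(1-\frac{r-w}{r+w}\bigr)\to0$, this is at most $2\Phi(-\log n)$ for all sufficiently large $n$.
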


\subsection{From a binding DI code to a dishonest commitment code}\label{ssec:DI_to_BC_dishonest}
In Section~\ref{sec:strategy_concealment}, we have shown that the proposed commitment protocol is concealing under a certain rate condition, see Theorem~\ref{thm:concealment}. We have shown soundness in Subsection~\ref{ssec:sound_dishonest} and bindingness in Subsection~\ref{ssec:binding_dishonest}. It is only left to calculate the size of the commitment code and study its rate.

Let us start bounding the size of the modified binding DI code used in the dishonest commitment setting following the arguments in \cite{CDBW:Gauss}. Start by remembering that, the code words are arranged in the surface of the $n$-dimensional ball with radius $r=\sqrt{nP}$ and a minimum angle $\tilde\theta$ between them. This is known as a spherical $\tilde\theta$ packing \cite{packing_book, Wyner:packing_bound, shannon:Gauss, GF:packing_bound}, and the number of points that can be arranged under this condition is lower bounded by \begin{equation}\label{eq:general_packing bound}
N \geq (1+o(1))\frac{\sqrt{2\pi n}\cos\tilde\theta}{\sin^{n-1}\tilde\theta}.
\end{equation}
Since we have from Equation~\eqref{eq:extended_angle} that $\cos(\tilde\theta/2)\geq1-\frac{2w}{r}$, then $\cos(\tilde\theta)\geq1-\frac{8w}{r}$, and $\sin(\tilde \theta)\leq2\sqrt{4w/r}$, with $w=\sigma\log n$ and $r=\sqrt{nP}$. Substituting into Equation~\eqref{eq:general_packing bound}, we can bound
\begin{align}
    \log N&\geq (n-1)\log\frac{1}{\sin\tilde \theta}-O(\log n)\nonumber\\
    &=(n-1)\log \sqrt{\frac{\sqrt{nP}}{8\sigma\log n}} -O(\log n)\nonumber\\
    &=\frac{n\log n}{4}-O(n\log\log n),
\end{align}
where in the first inequality we have taken the logarithm and absorbed the lower order terms into $O(\log n)$, in the second line we introduced the value of $\sin\tilde \theta$, and simplified the expression to leading order in the last line. Hence, the linearithmic DI rate is given by
\begin{equation}
    \dot R_\text{DI}(n):=\frac{\log N}{n\log n}\geq \frac{1}{4}-O\left(\frac{\log\log n}{\log n}\right).
\end{equation}
For large enough $n$ the last term vanishes. Hence, $\dot R_{\text{DI}}=\frac14$ is an achievable rate for the binding DI code. 

Finally, following the last steps in the proof of Theorem~\ref{thm:HBC}, $\dot R_{\text{BC}}=\frac14$ is also an achievable linearithmic rate for dishonest BC. This argument, together with the the concealment, bindingness, and soundness calculations in Sections \ref{ssec:concelaing}, \ref{ssec:binding_dishonest}, and \ref{ssec:sound_dishonest} show the following:

\begin{theorem}\label{thm:main}
Given a binding projection-based DI code and $\gamma$, one can construct a BC code in the dishonest setting, following the strategy in Section~\ref{sec:strategy_concealment}, which is
\[
\begin{split}
&\epsilon_c \text{-concealing with}\hspace{10pt}\epsilon_c=2^{-\frac12\gamma n\log n},\\
&\epsilon_b \text{-binding with}\hspace{23pt}\epsilon_b=2\Phi\left(-\log n\right),\\
&\epsilon_s \text{-sound with}\hspace{29pt}\epsilon_s=2\Phi\left(-\log n\right),
\end{split}
\]
achieving a linearithmic rate of $\dot R_{BC}=\frac14$. Since $\epsilon_c,\epsilon_b,\epsilon_s\to0$ as $n\to\infty$, the code is reliable.
\end{theorem}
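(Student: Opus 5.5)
Theorem~\ref{thm:main} is mostly an assembly of results already established in this section, so the plan is to check that the three security guarantees hold simultaneously for one code and one rate. I fix the binding projection-based DI code of Subsection~\ref{ssec:binding_dishonest}. Its code words lie on the sphere of radius $\sqrt{nP}$ with minimum angle $\tilde\theta$ given by Eq.~\eqref{eq:extended_angle}, and its decoders are the slabs of Eq.~\eqref{eq:decoder_layer}. I plug this code into the protocol of Section~\ref{sec:strategy_concealment}. I treat the three properties in turn and then the rate.

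\emph{Soundness.} The missed-identification probability computed in Eq.~\eqref{eq:layer_error1} does not depend on the angle between code words. Hence Theorem~\ref{thm:epsilon_soundness} carries over verbatim to the enlarged angle $\tilde\theta$, giving $\epsilon_s=2\Phi(-\log n)$.

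\emph{Bindingness.} Theorem~\ref{thm:epsilon_binding} bounds the minimum acceptance probability in Eq.~\eqref{eq:binding_min} by $2\Phi(-\log n)$, uniformly over all admissible inputs $\tilde x^n$ with $\|\tilde x^n\|_2\le\sqrt{nP}$. This gives $\epsilon_b=2\Phi(-\log n)$.

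The step I expect to be the real obstacle is the uniformity claim behind Theorem~\ref{thm:epsilon_binding}, namely that no admissible input beats the symmetric maximum-power input. To make it rigorous, I would decompose $\tilde x^n$ in the plane spanned by $\vec u_i,\vec u_j$ plus an orthogonal part. The orthogonal part does not affect either projection. In the plane, both projected distances $\|\Pi_{\vec u_i}\tilde x^n-\vec u_i\|$ and $\|\Pi_{\vec u_j}\tilde x^n-\vec u_j\|$ would have to be at most about $\sigma\log n$ plus a slack of order $\sigma\log n$ for both acceptance probabilities to be non-negligible. By the choice of $\tilde\theta$, the region where this happens lies outside the power ball. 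The slack must be handled carefully: $\Pr(\cD_i\mid x^n)$ stays small only when the projected distance exceeds $\sigma\log n$ by an amount growing with $n$. If the margin in Eq.~\eqref{eq:opt_strategy} turns out too tight, I would enlarge the angle by a constant factor, e.g.\ use $3\sigma\log n$ in Eq.~\eqref{eq:extended_angle}. This changes only lower-order terms of the rate.

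\emph{Concealment.} Theorem~\ref{thm:concealment} gives $\epsilon_c=2^{-\frac12\gamma n\log n}$ whenever $\dot R_{\rm BC}(n)\le\dot R_{\rm DI}(n)-\gamma-\frac{C_T(\cG)+\delta}{\log n}$.

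\emph{Rate.} The packing bound of Eq.~\eqref{eq:general_packing bound} with the enlarged angle gives $\dot R_{\rm DI}(n)\ge\frac14-O(\log\log n/\log n)$, as computed just above. Choosing $r=\lfloor(\frac14-\nu)n\log n\rfloor$ with $\gamma<\nu$ satisfies the concealment condition for large $n$. Letting $\nu\to0$, exactly as in the proof of Theorem~\ref{thm:HBC}, yields the achievable rate $\dot R_{\rm BC}=\frac14$.

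Finally, $\Phi(-\log n)=\exp[-\Theta((\log n)^2)]$ by Proposition~\ref{prop:small_projection}, so all three errors vanish and the code is reliable.
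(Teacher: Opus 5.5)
Your proposal is correct and follows the paper's own route: it assembles Theorems~\ref{thm:concealment}, \ref{thm:epsilon_soundness} and \ref{thm:epsilon_binding} with the packing estimate $\dot R_{\rm DI}(n)\ge\frac14-O(\log\log n/\log n)$ and the $\nu\to0$ argument from the proof of Theorem~\ref{thm:HBC}. Your worry about the bindingness margin is justified: with the paper's $\tilde\theta$, the symmetric maximum-power input has projected distance $2\sigma\log n-O((\log n)^2/\sqrt n)$, slightly \emph{below} the $2\sigma\log n$ that Eq.~\eqref{eq:second_error} needs, so your constant-factor enlargement of the angle (or alternatively a one-sided Gaussian tail bound, giving $\Phi(-\log n+o(1))\le 2\Phi(-\log n)$ for large $n$) closes the step without changing the rate $\frac14$, and uniformity over all admissible $\vec x$ follows in one line from $\min\{a_i,a_j\}\le\frac12(a_i+a_j)\le\sqrt{nP}\cos(\tilde\theta/2)$ with $a_i=(\vec x,\vec u_i)/\sqrt{nP}$.
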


\begin{corollary}
    The linearithmic commitment capacity over the AWGN channel $\cG$ is bounded by $\dot C_{BC}(\cG)\geq\frac15$.\hfill\qed
\end{corollary}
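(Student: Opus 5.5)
The corollary should follow directly from Theorem~\ref{thm:main}. The plan is to exhibit, for every large $n$, the binding projection-based DI code of Subsection~\ref{ssec:DI_to_BC_dishonest} and feed it into the protocol of Section~\ref{sec:strategy_concealment}, and then compare the achieved rate with the definition of capacity. Since the capacity is the supremum of achievable rates, any achievable rate $\dot R_{\text{BC}}\geq\frac15$ gives the claim. Theorem~\ref{thm:main} in fact yields $\frac14>\frac15$, so the stated bound $\frac15$ is weaker than what the preceding theorem provides, and I would prove it a fortiori.

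\textbf{Step 1 (choosing the code parameters).} Take the spherical $\tilde\theta$-packing with $\tilde\theta$ from Equation~\eqref{eq:extended_angle}. The packing bound gives $\log N\geq \frac{n\log n}{4}-O(n\log\log n)$, hence $\dot R_\text{DI}(n)\geq\frac14-O(\log\log n/\log n)$. Pick $k=\lfloor\log N\rfloor$ so that $N=2^k$ codewords are used.

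\textbf{Step 2 (choosing the commitment length).} Fix a target rate $\dot R<\frac14$, for instance $\dot R=\frac15$. Choose $r=\lfloor \dot R\, n\log n\rfloor$ and $\gamma=\frac12(\frac14-\dot R)$. For large $n$ the concealment condition of Theorem~\ref{thm:concealment},
\[
\dot R_\text{BC}(n)\leq\dot R_\text{DI}(n)-\gamma-\frac{C_T(\cG)+\delta}{\log n},
\]
holds, because the correction terms $O(\log\log n/\log n)$ and $(C_T+\delta)/\log n$ vanish. Thus $\epsilon_c=2^{-\frac12\gamma n\log n}\to0$.

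\textbf{Step 3 (binding and soundness).} Invoke Theorems~\ref{thm:epsilon_binding} and \ref{thm:epsilon_soundness}, which give $\epsilon_b=\epsilon_s=2\Phi(-\log n)$. By Proposition~\ref{prop:small_projection} this quantity vanishes as $n\to\infty$. The three errors therefore vanish while the rate converges to $\dot R$, so $\dot R$ is achievable. Letting $\dot R\uparrow\frac14$ (or simply taking $\dot R=\frac15$) gives $\dot C_{BC}(\cG)\geq\frac15$.

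\textbf{Main obstacle.} The only delicate point is the rigour of the dishonest binding argument behind Theorem~\ref{thm:epsilon_binding}: the min-over-two-tests bound must hold for \emph{every} admissible input, not just the symmetric one. If I needed to shore it up, I would argue via the triangle inequality. Any $x^n$ in the power ball has projections onto $\vec u_i$ and $\vec u_j$ whose sum of distances to the heads is at least $2\sigma\log n$ (up to lower-order terms) by the choice of $\tilde\theta$. Hence one of the two distances is at least $\sigma\log n(1-o(1))$, and the Gaussian tail then bounds the acceptance probability of that test. A slack of this kind in the constants is presumably why only the weaker $\frac15$ is claimed rather than $\frac14$.
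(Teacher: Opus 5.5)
Your Steps 1--3 are correct and follow the paper's route. The corollary is immediate from Theorem~\ref{thm:main}, which already makes $\dot R_{\text{BC}}=\frac14>\frac15$ achievable, and your explicit choice $\gamma=\frac12(\frac14-\dot R)$ just spells out the closing step of the proof of Theorem~\ref{thm:HBC}.

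\textbf{The fallback binding argument fails.} Your ``main obstacle'' paragraph would not prove binding. Write $r=\sqrt{nP}$, $w=\sigma\log n$, $\hat u_i=\vec u_i/\|\vec u_i\|_2$, $t_i=(x^n,\hat u_i)\le r$ and $Z_1=(\vec Z,\hat u_i)\sim\cN(0,1)$. Then $\cD_i$ accepts iff $|t_i-r+\sigma Z_1|\le w$. An offset $r-t_i=w(1-o(1))$ only puts the projection of $x^n$ on the edge of the slab, where the test still accepts with probability about $\frac12$. Your reasoning therefore yields only $\min[\Pr(\cD_i|x^n),\Pr(\cD_j|x^n)]\lesssim\frac12$, not a vanishing $\epsilon_b$. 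You need the larger offset to exceed $w$ by an amount growing in units of $\sigma$.

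\textbf{How to repair it.} The choice of $\tilde\theta$ gives twice the sum you claimed. For $\|x^n\|_2\le r$ and $\angle(\vec u_i,\vec u_j)\ge\tilde\theta$, Cauchy--Schwarz gives $t_i+t_j\le r\,\|\hat u_i+\hat u_j\|_2\le 2r\cos\frac{\tilde\theta}{2}$. Hence
\[
(r-t_i)+(r-t_j)\ \ge\ 2r\Bigl(1-\cos\frac{\tilde\theta}{2}\Bigr)=\frac{4rw}{r+w}=4\sigma\log n-O\Bigl(\frac{(\log n)^2}{\sqrt n}\Bigr).
\]
So one offset is at least $2\sigma\log n-o(1)$, and that test accepts with probability at most $\Phi(-\log n+o(1))=\exp[-\Theta((\log n)^2)]$. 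This sum bound is the rigorous form of the paper's informal step ``moving closer to one word moves away from the other.'' It is also why $\tilde\theta$ is roughly twice the angle of Equation~\eqref{eq:angle-dense}.

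\textbf{On the constant $\frac15$.} Your guess that such slack explains why only $\frac15$ is claimed is unfounded. Enlarging $\tilde\theta$ by any constant factor changes $\log N$ by only $O(n)$, which is invisible at the linearithmic scale. The paper itself asserts $\frac14$ (Theorem~\ref{thm:main}, abstract and conclusions), so the stated $\frac15$ is simply a weaker bound that follows a fortiori.
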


\section{Previous BC constructions are based on DI}\label{sec:previous_constructions}
One can observe that the commitment code constructions over AWGN channels in \cite{NBSH:commitment_Gauss, CDN:capacity_UNC} can actually also be traced back to a DI code! Indeed, the code words input to the AWGN channel and the corresponding decoders form what we call a \emph{typicality-based} DI code. These are the first codes that were proposed for DI over AWGN channels \cite{SPBD:DI_power,DI-fading}, and they are based on conditional typicality properties of the input and output sequences of the noisy channel. 

Typicality-based DI codes already achieve a linearithmic scaling of the word length with the block length $n$. However, the achievable rate is limited to $\dot R_\text{DI}^{(\cT)}(\cG)=\frac14$, while the strong converse on the capacity was $\dot C_\text{DI}(\cG)\leq\frac12$. Recently, codes based on projective (rather than typicality) ideas started to close this gap between lower and upper bounds \cite{galaxy-codes,CDBW:EW_simplified}. The DI capacity over AWGN channels was finally established in \cite{CDBW:Gauss} by showing that $\dot R_\text{DI}(\cG)=\frac12$ is an achievable rate when using a code optimising the projective-test ideas (a modification of this capacity-achieving code is the one we have used in this project). A deeper discussion on the similarities and differences on typicality-based and projection-based DI code can be found in \cite[Sec.~I.C]{CDBW:Gauss}.

Now, going back to commitment, and with the insight provided in the previous sections, one can calculate the linearithmic achievable commitment rate of the constructions in \cite{NBSH:commitment_Gauss}. Remember that, there, the capacity was calculated in the linear scale and shown to be infinite. Indeed, they show that the achievable (linear rate) commitment capacity is bounded, for some constants $\gamma,\delta>0$, by 
\begin{equation}\label{eq:old_commitment_rate}
R_\text{BC}\geq R(\alpha)\left[1-\frac{1}{R(\alpha)}(C_T(\cG)-\delta)-\gamma\right],
\end{equation}
with 
\begin{equation}
R(\alpha)\geq-\frac12\log\left[1-\left(1-\frac{\alpha^2}{2}\right)^2\right].
\end{equation}
Please notice a typo on their statement of $R(\alpha)$. Indeed, there, the $\alpha$ is missing a square inside the parenthesis, c.f.~\cite{Gallager:simple_aplication}. Finally, one needs to observe that, due to the structure of the typicality decoders, the parameter $\alpha$ can in fact vanish with $n$ as fast as $\alpha(n)=n^{-1/4}$. Then, 
\(
    R\left(n^{-\frac14}\right)\geq \frac14\log n+O(1/\sqrt{n}),
\)
and
\begin{equation}
    R_\text{BC}(\cG)\geq(1-\gamma)\left[\frac14\log n+O\left(\frac{1}{\sqrt{n}}\right)\right]-C_\text{T}(\cG)+\delta.
\end{equation}
Therefore, the commitment rate calculated on the linearithmic scale is:
\begin{equation}
    \dot R_{\text{BC}}(\cG)=\frac{R_\text{BC}(\cG)}{\log n}\geq\frac{1-\gamma}{4}-O\left(\frac{1}{\log n}\right).
\end{equation}
Hence, since $\gamma>0$ can be chosen arbitrarily small, $\dot R_{\text{BC}}=\frac14$ is an achievable commitment rate in the linearithmic scale that stems from the (hidden) typicality-based commitment code in \cite{NBSH:commitment_Gauss}.

This result strengthens the duality between commitment and DI. In this section, we have shown that commitment codes can achieve a linearithmic rate of $\frac14$ when constructed through typicality-based DI codes, a sort of sub-optimal DI codes which can at most achieve this same rate $\frac14$. 

In this paper, we have decided to show the commitment results using instead a projection-based decoding approach. The reason for this is not only its simplicity, but also the fact that these DI codes can achieve rates of $\dot R_{\text{DI}}=\frac12$. In fact, these are the codes we used to show the lower bound on the honest but curious commitment capacity $\dot C_{\text{hBC}}(\cG)\geq\frac12$. For the dishonest case, we have used a simplified, single-layer projection-based code which is sub-optimal (in the sense that it only achieves a rate of $\dot R_{\text{DI}}=\frac14$, as in the typicality based approach). Because we have not yet found a way to ensure bindingness of multi-layer constructions (needed to achieve the higher rate of $\dot R_{\text{DI}}=\frac12$). If one manages to find multi-layer binding DI codes, the dishonest commitment capacity could be improved to $\dot C_{\text{BC}}(\cG)\geq\frac12$ as in the honest but curious case.


\section{Discussion and conclusions}\label{sec:conclusions}
This work establishes a connection between commitment and deterministic identification. Rather than viewing commitment as an isolated cryptographic primitive, our results show that it can be naturally constructed from DI by imposing an additional bindingness property that prevents a single input sequence from being reliably accepted by multiple identification tests. This perspective provides a new conceptual understanding of BC and suggests that DI captures much of the underlying information-theoretic structure required for secure commitment.

As a first application of this connection, we studied BC over AWGN channels. We observed how, through our construction, the asymptotic properties of DI can be inherited by commitment. Using capacity-achieving DI codes, we obtained commitment schemes operating in the linearithmic regime and established an achievable rate of $\dot R_\text{hBC}=\frac{1}{2}$ in the honest but curious picture, thereby providing a lower bound on the linearithmic commitment capacity in this security setting of $\dot C_\text{hBC}(\cG)\geq\frac{1}{2}$. This result is indeed similar to the linearithmic DI capacity over AWGN channels which is given by $\dot C_\text{DI}(\cG)=\frac12$. Furthermore, in the dishonest case we have constructed commitment codes with half this rate, thus lower bounding the dishonest commitment capacity by $\dot C_{\text{BC}}(\cG)\geq\frac14$.

In spite of the current lack of a converse for BC (which we are eager to study as a next step) the results already inherit one of the most remarkable features of DI. In contrast to classical communication problems, where capacity is determined at first order by channel parameters (the average power constraint and the noise, in fact, its ratio), the achievable linearithmic commitment rate is governed instead by a universal constant. This constant can be traced back to the DI construction and ultimately originates from the geometry of the family of output distributions.

Another interesting property of DI is how the rate scaling is tied to the speed at which the errors vanish. Looking at Theorem~\ref{thm:main}, notice the difference between the behaviour of the bindingness and soundness errors (which are tied to the DI construction) and the concealment error (which is not). The first two vanish sub-exponentially fast in $n$ and, in fact, if one imposes an exponentially vanishing error behaviour the linearithmic scaling of the code is lost \cite{RRGauss-arXiv,CDBW:Reliability-TCOM,CDBW:Gauss}, and therefore so would the BC linearithmic behaviour of our construction. In contrast, the concealing error, which depends on the Hashing construction, vanishes super-exponentially fast!

The efficiency of the protocol is worth remarking. First, on the use of the noisy channel. Indeed, with only $n$ uses of the AWGN link, our strategy can commit reliably to $\sim 2^{\frac n2\log n}$ different messages. This is much more efficient than what one observes for noisy DMCs, where only $\sim 2^{Rn}$ commitments are available. But notice also the efficiency on the use of the noiseless channel. In the general definition of the commitment protocol (Def.~\ref{def:commitment}), the two parties are allowed to share arbitrary finite amounts of noiseless information with every use of the noisy channel. In our construction, however, we only need one-way communication to send $k+\log|\cF|$ noiseless bits during the commit phase (to send $z^r$ and the description of the used hashing function), and $k$ more in the reveal phase (to send $w^k$). We cannot be precise about the exact number of uses of the noiseless channel since we maintained the generality over the family of hashing functions $\cF$. To complete this argument, however, just notice that there exist families of 2-universal hash functions $f\in\cF:\{0,1\}^k\to\{0,1\}^r$ of size $|\cF|=2^{O(k)}$. Hence, using one of these, the amount of noiseless communication throughout the protocol would be of only $\Theta(k)=\Theta(n\log n)$ bits. 

The connection uncovered in this paper appears to be substantially more general than the AWGN setting. While the Gaussian channel provides a particularly clean illustration, we expect the relationship between commitment and deterministic identification to extend to a much broader class of channels. In particular, there are other continuous channels for which linearithmic capacity-achieving DI codes resulting from adaptations of the one in \cite{CDBW:Gauss} (the one that we have used in this paper) are known \cite{CDBW:Bern_arxiv}. These provide a natural next step to study.

Another interesting case is that of discrete memoryless channels. Both BC and DI capacities are already well understood for DMCs. Both these cases operate in the linear regime (instead of the linearithmic behaviour we find for continuous channels), and the code sizes in this regime are asymptotically governed by two different information-theoretic aspects (rather than the geometrical ones we find for continuous channels in the leading order of the linearithmic regime): the maximum entropy of the input in DI, and the equivocation of the channel in BC. These two quantities are generally different (although related), can one construct a BC protocol for DMCs from a DI code and identify the origin of such a difference?


Finally, our framework sheds new light on the previous construction used in \cite{NBSH:commitment_Gauss} to show that the capacity in the linear scale is infinite. By tracing back their steps, one can observe that that construction also relies on a DI code! In fact, it is similar to the initial code constructions for DI over AWGN channels \cite{SPBD:DI_power,DI-fading}. This strengthens our intuition that the connection between DI and commitment might be even more fundamental than the one unveiled by this work.



\bigskip\noindent
{
\textbf{Acknowledgments.} 
The authors would like to thank S. Mamindlapall for inspiring early discussions with C. Deppe and A. Winter on the possible connections between identification and commitment. These discussions motivated the line of research that ultimately led to the present work.

P. Colomer and A. Winter were supported by the Institute for Advanced Study of the Technical University Munich. 

P. Colomer, C. Deppe, and H. Boche acknowledge the support by the German Federal Ministry of Research, Technology and Space (BMFTR) within the National Initiative on 6G Communication Systems through the Research Hub 6G-life under Grants 16KIS2414 and 16KIS2415; and also through the project QR.N under Grants 16KIS2195 and 16KIS2196. H. Boche and C. Deppe were further supported through the project Q-STARS under grants 16KIS2611 and 16KIS2602.

P. Colomer and H. Boche were also supported by the BMFTR within the project CommUnity under Grant 16KISS013, and by the German Research Foundation (DFG) as part of Germany’s Excellence Strategy–EXC 2050/2–Project ID 390696704–Cluster of Excellence “Center for Tactile Internet with Human-in-the-Loop (CeTI)” of Technische Universität Dresden.

A. Winter is supported by the Alexander von Humboldt Foundation; furthermore by the German Research Foundation (DFG) under Germany’s Excellence Strategy: Cluster of Excellence ``Matter and Light for Quantum Computing'' (ML4Q) EXC 2004/2–390534769; and in part by the Spanish MICIN (project PID2022-141283NB-I00) with the support of FEDER funds.
}


\bibliographystyle{ieeetr}
\bibliography{ID}

@ARTICLE{Hayashi:commitment_list-decoding,
  author={Hayashi, Masahito},
  journal={IEEE Transactions on Information Theory}, 
  title={Secure List Decoding and its Application to Bit-String Commitment}, 
  year={2022},
  volume={68},
  number={6},
  pages={3620-3642},
  doi={10.1109/TIT.2022.3150498}}

@INPROCEEDINGS{BJMY:noisy_settings_AWGN,
  author={Budkuley, Amitalok and Joshi, Pranav and Mamindlapally, Manideep and Yadav, Anuj Kumar},
  booktitle={2023 IEEE International Symposium on Information Theory (ISIT)}, 
  title={On the (Im)possibility of Commitment over Gaussian Unfair Noisy Channels}, 
  year={2023},
  volume={},
  number={},
  pages={483-488},
  doi={10.1109/ISIT54713.2023.10206598}}

@ARTICLE{AD:ID_ViaChannels,
  author={Ahlswede, Rudolf and Dueck, Gunter},
  journal={IEEE Transactions on Information Theory}, 
  title={Identification via channels}, 
  year={1989},
  volume={35},
  number={1},
  pages={15-29},
  doi={10.1109/18.42172}
}

@ARTICLE{CDBW:DI_classical,
  author={Colomer, Pau and Deppe, Christian and Boche, Holger and Winter, Andreas},
  journal={IEEE Transactions on Information Theory}, 
  title={Deterministic identification over channels with finite output: a dimensional perspective on superlinear rates}, 
  year={2025},
  volume={71},
  number={5},
  pages={3373-3396},
  doi={10.1109/TIT.2025.3531301}
}

@ARTICLE{Shannon:TheoryCommunication,
  author={Shannon, Claude E.},
  journal={The Bell System Technical Journal}, 
  title={A mathematical theory of communication}, 
  year={1948},
  volume={27},
  number={3{\&}4},
  pages={379-423 {\&} 623–656},
  doi={10.1002/j.1538-7305.1948.tb01338.x}
}

@ARTICLE{AC:DI,
  author={Ahlswede, Rudolf and Cai, Ning},
  journal={IEEE Transactions on Information Theory}, 
  title={Identification without randomization}, 
  year={1999},
  month={11},
  volume={45},
  number={7},
  pages={2636-2642},
  doi={10.1109/18.796419}
}

@ARTICLE{SPBD:DI_power,
  author={Salariseddigh, Mohammad J. and Pereg, Uzi and Boche, Holger and Deppe, Christian},
  journal={IEEE Transactions on Information Theory}, 
  title={{Deterministic Identification Over Channels With Power Constraints}}, 
  year={2022},
  volume={68},
  number={1},
  month={1},
  pages={1-24},
  doi={10.1109/TIT.2021.3122811}
}

@unpublished{RRGauss-arXiv,
    title={Rate-reliability tradeoff for deterministic identification over {Gaussian} channels}, 
    author={Pau Colomer and Christian Deppe and Holger Boche and Andreas Winter},
    year={2026},
    note = {ArXiv[cs.IT]:2602.12182},
    month = {2},
  doi = {10.48550/arXiv.2602.12182}
}

@misc{CDBW:Bern_arxiv,
      title={Deterministic identification for Bernoulli channels and related channels with continuous input}, 
      author={Pau Colomer and Christian Deppe and Holger Boche and Andreas Winter},
      year={2026},
      eprint={2605.05168},
      archivePrefix={arXiv},
      primaryClass={cs.IT},
      url={https://arxiv.org/abs/2605.05168}, 
      note={ArXiv[cs.IT]:2605.05168},
}

@misc{CDBW:Gauss,
      title={Optimal Codes for Deterministic Identification over Gaussian Channels: Closing the Capacity Gap}, 
      author={Pau Colomer and Christian Deppe and Holger Boche and Andreas Winter},
      year={2026},
      eprint={2604.11782},
      archivePrefix={arXiv},
      primaryClass={cs.IT},
      url={https://arxiv.org/abs/2604.11782},
      note={ArXiv[cs.IT]:2604.11782}, 
}

@ARTICLE{CDBW:Reliability-TCOM,
  author={Colomer, Pau and Deppe, Christian and Boche, Holger and Winter, Andreas},
  journal={IEEE Transactions on Communications}, 
  title={Rate-Reliability Tradeoff for Deterministic Identification}, 
  year={2025},
  volume={73},
  number={12},
  pages={14107-14123},
  doi={10.1109/TCOMM.2025.3594790}}

@INPROCEEDINGS{Ja:ID_easier,
  author={JaJa, Joseph},
  booktitle={Proc. 26th Annual Symposium on Foundations of Computer Science (SFCS 1985)}, 
  title={Identification is easier than decoding}, 
  year={1985},
  pages={43-50},
  doi={10.1109/SFCS.1985.32}
}

@article{HanVerdu:ID,
  author={Han, Te Sun and Verd\'u, Sergio},
  title={New results in the theory of identification via channels}, 
  year={1992},
  month={1},
  journal = {IEEE Transactions on Information Theory},
  volume={38},
  number={1},
  pages={14-25},
  doi={10.1109/18.108245}
}

@INPROCEEDINGS{DI-fading,
  author={Salariseddigh, Mohammad J. and Pereg, Uzi and Boche, Holger and Deppe, Christian},
  booktitle={Proc. 2020 IEEE Information Theory Workshop (ITW)}, 
  title={{Deterministic Identification Over Fading Channels}}, 
  year={2021},
  pages={1-5},
  doi={10.1109/ITW46852.2021.9457587}
}

@INPROCEEDINGS{DI-poisson,
  author={Salariseddigh, Mohammad J. and Pereg, Uzi and Boche, Holger and Deppe, Christian and Schober, Robert},
  booktitle={Proc. 2021 IEEE Globecom Workshops}, 
  title={{Deterministic Identification Over Poisson Channels}}, 
  year={2021},
  pages={1-6},
  doi={10.1109/GCWkshps52748.2021.9682110}
}

@ARTICLE{Wyner:packing_bound,
  author={Wyner, A. D.},
  journal={The Bell System Technical Journal}, 
  title={Capabilities of bounded discrepancy decoding}, 
  year={1965},
  volume={44},
  number={6},
  pages={1061-1122},
  doi={10.1002/j.1538-7305.1965.tb04170.x}}

@article{GF:packing_bound,
  author  = {Gil Fernandez, Irene and Kim, Jaehoon and Liu, Hong and Pikhurko, Oleg},
  title   = {New lower bounds on kissing numbers and spherical codes in high dimensions},
  journal = {American Journal of Mathematics},
  volume  = {147},
  number  = {4},
  pages   = {901--925},
  year    = {2025},
  doi     = {10.1353/ajm.2025.a966288},
  issn    = {0002-9327}
}

@ARTICLE{Shannon:Gauss,
  author={Shannon, Claude E.},
  journal={The Bell System Technical Journal}, 
  title={Probability of error for optimal codes in a Gaussian channel}, 
  year={1959},
  volume={38},
  number={3},
  pages={611-656},
  doi={10.1002/j.1538-7305.1959.tb03905.x}}

@INPROCEEDINGS{CDBW:EW_simplified,
  author={Colomer, Pau and Deppe, Christian and Boche, Holger and Winter, Andreas},
  booktitle={European WIRELESS 2026; 31st European Wireless Conference}, 
  title={Simplified Achievability Scheme for Deterministic Identification over Gaussian Channels}, 
  year={2026},
  pages={},
  doi={}
}

@inproceedings{GMW:semi-honest,
author = {Goldreich, O. and Micali, S. and Wigderson, A.},
title = {How to play ANY mental game},
year = {1987},
isbn = {0897912217},
publisher = {Association for Computing Machinery},
address = {New York, NY, USA},
url = {https://doi.org/10.1145/28395.28420},
doi = {10.1145/28395.28420},
booktitle = {Proceedings of the Nineteenth Annual ACM Symposium on Theory of Computing},
pages = {218–229},
numpages = {12},
location = {New York, New York, USA},
series = {STOC '87}
}

@ARTICLE{AC:honest-curious,
  author={Ahlswede, R. and Csiszar, I.},
  journal={IEEE Transactions on Information Theory}, 
  title={Common randomness in information theory and cryptography. I. Secret sharing}, 
  year={1993},
  volume={39},
  number={4},
  pages={1121-1132},
  doi={10.1109/18.243431}}

@ARTICLE{BBCM:privacy_ampl,
  author={Bennett, C.H. and Brassard, G. and Crepeau, C. and Maurer, U.M.},
  journal={IEEE Transactions on Information Theory}, 
  title={Generalized privacy amplification}, 
  year={1995},
  volume={41},
  number={6},
  pages={1915-1923},
  doi={10.1109/18.476316}}

@ARTICLE{Burnashev:ID_gauss,
  author={Burnashev, M.V.},
  journal={IEEE Transactions on Information Theory}, 
  title={On identification capacity of infinite alphabets or continuous-time channels}, 
  year={2000},
  volume={46},
  number={7},
  pages={2407-2414},
  doi={10.1109/18.887853}}

@ARTICLE{Shannon:noise,
  author={Shannon, C.E.},
  journal={Proceedings of the IRE}, 
  title={Communication in the Presence of Noise}, 
  year={1949},
  volume={37},
  number={1},
  pages={10-21},
  doi={10.1109/JRPROC.1949.232969}}

@ARTICLE{AWGN_radio,
  author={Ozarow, L.H. and Shamai, S. and Wyner, A.D.},
  journal={IEEE Transactions on Vehicular Technology}, 
  title={Information theoretic considerations for cellular mobile radio}, 
  year={1994},
  volume={43},
  number={2},
  pages={359-378},
  doi={10.1109/25.293655}}

@ARTICLE{AWGN_IT,
  author={Biglieri, E. and Proakis, J. and Shamai, S.},
  journal={IEEE Transactions on Information Theory}, 
  title={Fading channels: information-theoretic and communications aspects}, 
  year={1998},
  volume={44},
  number={6},
  pages={2619-2692},
  doi={10.1109/18.720551}}

@book{AWGN_wireless,
  author    = {Tse, David and Viswanath, Pramod},
  title     = {Fundamentals of Wireless Communication},
  publisher = {Cambridge University Press},
  year      = {2005},
  address   = {Cambridge},
  isbn      = {9780521845274},
  doi       = {10.1017/CBO9780511807213}
}

@book{packing_book,
  author       = {Conway, John H. and Sloane, Neil J. A.},
  title        = {Sphere Packings, Lattices and Groups},
  edition      = {3},
  series       = {Grundlehren der mathematischen Wissenschaften},
  volume       = {290},
  publisher    = {Springer},
  location     = {New York},
  year         = {1999},
  isbn         = {978-0-387-98585-5},
  ean          = {9781475765687},
  doi          = {10.1007/978-1-4757-6568-7},
  pagetotal    = {706},
  issn         = {0072-7830},
}

@INPROCEEDINGS{galaxy-codes,
  author={Boche, Holger and Deppe, Christian and Mahmoodi, Safieh and Omidi, Golamreza},
  booktitle={2025 IEEE International Symposium on Information Theory (ISIT)}, 
  title={Galaxy Codes: Advancing Achievability for Deterministic Identification via Gaussian Channels}, 
  year={2025},
  note = {ArXiv[cs.IT]:2501.12548},
  doi={10.1109/ISIT63088.2025.11195636}}

@misc{CDN:capacity_UNC,
      title={On the Commitment Capacity of Unfair Noisy Channels}, 
      author={Claude Crépeau, Rafael Dowsley, Anderson C. A. Nascimento},
      year={2019},
      eprint={1905.10921},
      archivePrefix={arXiv},
      primaryClass={cs.CR},
      url={https://arxiv.org/abs/1905.10921}, 
}

@ARTICLE{NBSH:commitment_Gauss,
  author={Nascimento, Anderson C. A. and Barros, JoÃo and Skludarek, Stefan and Imai, Hideki},
  journal={IEEE Transactions on Information Theory}, 
  title={The Commitment Capacity of the Gaussian Channel Is Infinite}, 
  year={2008},
  volume={54},
  number={6},
  pages={2785-2789},
  doi={10.1109/TIT.2008.921686}}

@ARTICLE{Shannon:secrecy,
  author={Shannon, C. E.},
  journal={The Bell System Technical Journal}, 
  title={Communication theory of secrecy systems}, 
  year={1949},
  volume={28},
  number={4},
  pages={656-715},
  doi={10.1002/j.1538-7305.1949.tb00928.x}}

@article{ZCGLT:secure-computation,
title = {Secure Multi-Party Computation: Theory, practice and applications},
journal = {Information Sciences},
volume = {476},
pages = {357-372},
year = {2019},
issn = {0020-0255},
doi = {https://doi.org/10.1016/j.ins.2018.10.024},
url = {https://www.sciencedirect.com/science/article/pii/S0020025518308338},
author = {Chuan Zhao and Shengnan Zhao and Minghao Zhao and Zhenxiang Chen and Chong-Zhi Gao and Hongwei Li and Yu-an Tan},
}

@INPROCEEDINGS{Y:secure-computation,
  author={Yao, Andrew C.},
  booktitle={23rd Annual Symposium on Foundations of Computer Science (sfcs 1982)}, 
  title={Protocols for secure computations}, 
  year={1982},
  volume={},
  number={},
  pages={160-164},
  doi={10.1109/SFCS.1982.38}}

@InProceedings{EGL:contract-signing,
author="Even, S.
and Goldreich, O.
and Lempel, A.",
editor="Chaum, David
and Rivest, Ronald L.
and Sherman, Alan T.",
title="A Randomized Protocol for Signing Contracts",
booktitle="Advances in Cryptology",
year="1983",
publisher="Springer US",
address="Boston, MA",
pages="205--210",
isbn="978-1-4757-0602-4"
}

@InProceedings{Crepeau:crypto_protocols,
author="Cr{\'e}peau, Claude",
editor="Fumy, Walter",
title="Efficient Cryptographic Protocols Based on Noisy Channels",
booktitle="Advances in Cryptology --- EUROCRYPT '97",
year="1997",
publisher="Springer Berlin Heidelberg",
address="Berlin, Heidelberg",
pages="306--317",
isbn="978-3-540-69053-5"
}

@InProceedings{DKS:OT_and_BC,
author="Damg{\aa}rd, Ivan
and Kilian, Joe
and Salvail, Louis",
editor="Stern, Jacques",
title="On the (Im)possibility of Basing Oblivious Transfer and Bit Commitment on Weakened Security Assumptions",
booktitle="Advances in Cryptology --- EUROCRYPT '99",
year="1999",
publisher="Springer Berlin Heidelberg",
address="Berlin, Heidelberg",
pages="56--73",
isbn="978-3-540-48910-8"
}

@inproceedings{Blum:coin-flipping,
  title     = {Coin Flipping by Telephone},
  author    = {Blum, Manuel},
  booktitle = {Advances in Cryptology: A Report on CRYPTO 81},
  pages     = {11--15},
  year      = {1981},
  editor    = {Allen Gersho},
  publisher = {IEEE Workshop on Communications Security / U.C. Santa Barbara}
}

@article{BC:zero-knowledge,
title = {Minimum disclosure proofs of knowledge},
journal = {Journal of Computer and System Sciences},
volume = {37},
number = {2},
pages = {156-189},
year = {1988},
issn = {0022-0000},
doi = {https://doi.org/10.1016/0022-0000(88)90005-0},
url = {https://www.sciencedirect.com/science/article/pii/0022000088900050},
author = {Gilles Brassard and David Chaum and Claude Crépeau},
}

@article{Naor:BC_pseudorandomness,
title = {Bit commitment using pseudorandomness},
journal = {Journal of Cryptology},
volume = {4},
number = {2},
pages = {151-158},
year = {1991},
issn = {0022-0000},
doi = {https://doi.org/10.1007/BF00196774},
author = {Naor, Moni},
}

@InProceedings{HM:BC_computationally_secure,
author="Halevi, Shai
and Micali, Silvio",
editor="Koblitz, Neal",
title="Practical and Provably-Secure Commitment Schemes from Collision-Free Hashing",
booktitle="Advances in Cryptology --- CRYPTO '96",
year="1996",
publisher="Springer Berlin Heidelberg",
address="Berlin, Heidelberg",
pages="201--215",
isbn="978-3-540-68697-2"
}

@article{Mayers:no-go,
  title = {Unconditionally Secure Quantum Bit Commitment is Impossible},
  author = {Mayers, Dominic},
  journal = {Phys. Rev. Lett.},
  volume = {78},
  issue = {17},
  pages = {3414--3417},
  numpages = {0},
  year = {1997},
  month = {Apr},
  publisher = {American Physical Society},
  doi = {10.1103/PhysRevLett.78.3414},
  url = {https://link.aps.org/doi/10.1103/PhysRevLett.78.3414}
}

@article{LC:no-go,
  title = {Is Quantum Bit Commitment Really Possible?},
  author = {Lo, Hoi-Kwong and Chau, H. F.},
  journal = {Phys. Rev. Lett.},
  volume = {78},
  issue = {17},
  pages = {3410--3413},
  numpages = {0},
  year = {1997},
  month = {Apr},
  publisher = {American Physical Society},
  doi = {10.1103/PhysRevLett.78.3410},
  url = {https://link.aps.org/doi/10.1103/PhysRevLett.78.3410}
}

@article{Kent:relativistic,
  title = {Unconditionally Secure Bit Commitment},
  author = {Kent, Adrian},
  journal = {Phys. Rev. Lett.},
  volume = {83},
  issue = {7},
  pages = {1447--1450},
  numpages = {0},
  year = {1999},
  month = {Aug},
  publisher = {American Physical Society},
  doi = {10.1103/PhysRevLett.83.1447},
  url = {https://link.aps.org/doi/10.1103/PhysRevLett.83.1447}
}

@ARTICLE{Wyner:wiretap,
  author={Wyner, A. D.},
  journal={The Bell System Technical Journal}, 
  title={The wire-tap channel}, 
  year={1975},
  volume={54},
  number={8},
  pages={1355-1387},
  doi={10.1002/j.1538-7305.1975.tb02040.x}}

@InProceedings{WNI:BC_capacity_DMC,
author="Winter, Andreas
and Nascimento, Anderson C. A.
and Imai, Hideki",
editor="Paterson, Kenneth G.",
title="Commitment Capacity of Discrete Memoryless Channels",
booktitle="Cryptography and Coding",
year="2003",
publisher="Springer Berlin Heidelberg",
address="Berlin, Heidelberg",
pages="35--51",
isbn="978-3-540-40974-8"
}

@ARTICLE{Gallager:simple_aplication,
  author={Gallager, R.},
  journal={IEEE Transactions on Information Theory}, 
  title={A simple derivation of the coding theorem and some applications}, 
  year={1965},
  volume={11},
  number={1},
  pages={3-18},
  doi={10.1109/TIT.1965.1053730}}

@inproceedings{LHL_original,
author = {Impagliazzo, R. and Levin, L. A. and Luby, M.},
title = {Pseudo-random generation from one-way functions},
year = {1989},
isbn = {0897913078},
publisher = {Association for Computing Machinery},
address = {New York, NY, USA},
url = {https://doi.org/10.1145/73007.73009},
doi = {10.1145/73007.73009},
booktitle = {Proceedings of the Twenty-First Annual ACM Symposium on Theory of Computing},
pages = {12–24},
numpages = {13},
location = {Seattle, Washington, USA},
series = {STOC '89}
}

@INPROCEEDINGS{IMNW:efficient_protocol,
  author={Imai, Hideki and Morozov, Kirill and A. Nascimento, Anderson C. and Winter, Andreas},
  booktitle={2006 IEEE International Symposium on Information Theory}, 
  title={Efficient Protocols Achieving the Commitment Capacity of Noisy Correlations}, 
  year={2006},
  volume={},
  number={},
  pages={1432-1436},
  doi={10.1109/ISIT.2006.262083}}

@INPROCEEDINGS{MYMB:cost-constraint,
  author={Mamindlapally, Manideep and Yadav, Anuj Kumar and Mishra, Manoj and Budkuley, Amitalok J.},
  booktitle={2021 IEEE International Symposium on Information Theory (ISIT)}, 
  title={Commitment Capacity under Cost Constraints}, 
  year={2021},
  volume={},
  number={},
  pages={3208-3213},
  doi={10.1109/ISIT45174.2021.9518204}}

\end{document}